\newcommand{\cover}{\mathrm{cover}}
\newcommand{\val}{\mathrm{value}}
\newcommand{\bh}{\mathbf{h}}
\newcommand{\inn}{\Gamma_{in}}
\newcommand{\outn}{\Gamma_{out}}
\newcommand{\inh}{\hat{\Gamma}_{in}}
\newcommand{\outh}{\hat{\Gamma}_{out}}
\newcommand{\cE}{\mathcal{E}}
\newcommand{\cG}{\mathcal{G}}
\newcommand{\chG}{\hat{\mathcal{G}}}
\newcommand{\chE}{\hat{\mathcal{E}}}
\newcommand{\chV}{\hat{\mathcal{V}}}
\newcommand{\chP}{\hat{\mathcal{P}}}
\tikzstyle{vertex}=[circle, draw, inner sep=0pt, minimum size=6ex, very thick]
\newtheoremstyle{part}
    {3pt}
    {3pt}
    {}
    {\parindent}
    {\scshape}
    {:}
    {\newline} 
    {}
\newtheorem{thm}{Theorem}
\newtheorem{obs}[thm]{Observation}
\newtheorem{lemma}[thm]{Lemma}
\newtheorem{remark}[thm]{Remark}
\newtheorem{definition}[thm]{Definition}
\theoremstyle{part}
\theoremstyle{remark}
\newcommand{\cS}{\mathcal{S}}
\newcommand{\cA}{\mathcal{A}}
\newcommand{\cB}{\mathcal{B}}
\newcommand{\cV}{\mathcal{V}}
\newcommand{\vsk}{\vec{s}^*_k}
\newcommand{\vs}{\vec{s}}
\DeclareMathOperator*{\argmax}{arg\,max}
\newcommand{\greedyone}{\textsc{FullGreedy}}
\newcommand{\greedytwo}{\textsc{PartialGreedy}}
\newcommand{\hubalgo}{\textsc{HubHeuristic}}
\newcommand{\degalgo}{\textsc{DegreeHeuristic}}
\begin{document}

\title{On Greedy Approaches to Hierarchical Aggregation\thanks{This work is partially supported by NSF grant CCF-1657049 and NSF CAREER grant CCF-1844628. AP is partially supported by the National Science Foundation Graduate Research Fellowship under Grant No. DGE-1656518.}}
\author{
 Alexandra Porter\\
 \textit{Department of Computer Science} \\
\textit{Stanford University}\\
Stanford, CA \\
  \texttt{amporter@stanford.edu}
  \and
  Mary Wootters\\
  \textit{Departments of Computer Science} \\ \textit{and Electrical Engineering} \\
\textit{Stanford University}\\
Stanford, CA \\
  \texttt{marykw@stanford.edu}
}

\maketitle

\begin{abstract}
We analyze greedy algorithms for the  \em Hierarchical Aggregation  \em(HAG) problem, a strategy introduced in [Jia et al., KDD 2020] for speeding up learning on Graph Neural Networks (GNNs). The idea of HAG is to identify and remove redundancies in computations performed when training GNNs. The associated optimization problem is to identify and remove the \em most \em redundancies.
 
Previous work introduced a greedy approach for the HAG problem and claimed a 1-1/e approximation factor. We show by example that this is not correct, and one cannot hope for better than a 1/2 approximation factor. We prove that this greedy algorithm does satisfy some (weaker) approximation guarantee, by showing a new connection between the HAG problem and maximum matching problems in hypergraphs. We also introduce a second greedy algorithm which can out-perform the first one, and we show how to implement it efficiently in some parameter regimes. Finally, we introduce some greedy heuristics that are much faster than the above greedy algorithms, and we demonstrate that they perform well on real-world graphs. 
\end{abstract}

\section{Introduction}

In this work, we analyze an optimization problem that arises from \em Hierarchical Aggregration \em (HAG), a strategy that was recently introduced in \cite{HAG} for speeding up learning on \em Graph Neural Networks \em (GNNs).

At a high level, HAG identifies redundancies in the computations performed in training GNNs and elimates them.  This gives rise to an optimization problem, the \em HAG problem, \em which is to find and eliminate the most redundancies possible.  In this paper, we study greedy algorithms for this optimization problem.

Our contributions are as follows.
\begin{enumerate}
\item The work \cite{HAG} proposed a greedy algorithm, which we call \greedyone, for the HAG optimization problem, and claimed that it gives a $1 - 1/e$ approximation.  Unfortunately, this is not true, and we show by example that one cannot hope for a better than a $1/2$ approximation.  We prove a new approximation guarantee for \greedyone\  in Theorem~\ref{thm:mainapprox}.
In more detail, we are able to establish a $\frac{1}{d}(1 - 1/e)$ approximation ratio for a related objective function, where $d$ is a parameter of the problem ($d=2$ is a reasonable value). 

\item We propose a second greedy algorithm, \greedytwo, for the HAG optimization problem.  We show by example that this algorithm can obtain strictly better results than \greedyone\  mentioned above.  It is not obvious that \greedytwo\  is efficient, and in Theorem~\ref{thm:runningtime} we show that it can be implemented in polynomial time in certain parameter regimes.
\item While both of the greedy algorithms we study are ``efficient,'' in the sense that they are polynomial time, they can still be slow on massive graphs.  To that end, we introduce greedy heuristics and demonstrate that they perform well on real-world graphs. 
\end{enumerate}
Our approach is based on a new connection between the HAG problem and a problem related to maximum hypergraph matching.  We use this connection both in our approximation guarantees for \greedyone\ and in our efficient implementation of \greedytwo.

In Section~\ref{sec:prelim}, we define the HAG problem and set notation.
In Section~\ref{sec:algs}, we define algorithms \greedyone\ and \greedytwo.  
In Section~\ref{sec:efficiency}, we discuss the efficiency of these algorithms and show that both can be implemented in polynomial time in certain parameter regimes.
In Section~\ref{sec:approx}, we give a new approximation guarantee for \greedyone, and show by example that \greedytwo\ can do strictly better.
In Section~\ref{sec:experiments}, we compare \greedyone\ and \greedytwo\ in practice. We then discuss faster greedy heuristics and show empirically that they perform well.

\section{Preliminaries and Problem Definition}\label{sec:prelim}
\subsection{Abstraction of Graph Neural Networks}
Let $G = (V,E)$ be a directed\footnote{Throughout the paper we work with directed graphs, but if the underlying graph is undirected we may treat it as a directed graph by adding directed edges in both directions.} graph that represents some underlying data.  
For example, $G$ could arise from a social network, a graph of transactions, and so on.  The goal of a GNN defined on $G$ is to learn a \em representation \em $\mathbf{h}_v \in \mathbb{R}^s$ for each $v \in V$, with the goal of minimizing some loss function $\mathcal{L}(\{\mathbf{h}_v\,:\,v \in V\})$, which is typically designed so that the representations $\mathbf{h}_v$ can be used for prediction (for example, classifying unlabeled nodes).\footnote{A typical set-up for a GNN might be the following. The representations $\bh_v$ are some function $f$ of the features $\mathbf{x}_u$ and representations $\bh_u$ of the nodes $u$ in the neighborhood of $v$; a prediction $\mathbf{o}_v$ is a function $g$ of the $\mathbf{x}_v$ as well as of $\mathbf{h}_v$; and both $f$ and $g$ are fully connected feed-forward neural networks.  However, the details of GNNs will not actually matter for this work.} Graph neural networks were originally introduced by~\cite{scarselli2008graph}, and have numerous extensions and applications~\cite{hamilton2017inductive,kipf2016semi,velivckovic2017graph,ying2018hierarchical,cai2018comprehensive}.

Learning these representations $\mathbf{h}_v$ follows the abstract process depicted in Algorithm~\ref{alg:1}.  Each node $v$ calls a function \textsc{Aggregate} on the values $\mathbf{h}_w$ for $w \in \inn(v)$, resulting in an aggregated value $a_v$.  Here, $\inn(v)$ represents the set of nodes $w\in V$ so that $(w,v) \in E$.  Next, the node $v$ calls a function $\textsc{Update}$ on $a_v$ and the current value of $\bh_v$ to obtain an updated $\bh_v$.  Then this repeats.  Here, the function $\textsc{Aggregrate}$ can be as simple as a summation (e.g. in GCN~\cite{kipf2016semi}), or it can be more complicated (e.g. in GraphSAGE-P~\cite{hamilton2017inductive}).  In this work, we assume that \textsc{Aggregate} does not depend on the order of its inputs and can be applied hierarchically.  For example, we would have: 
\begin{align*}
&\Call{Aggregate}{\textsc{Aggregate}(x,y), \textsc{Aggregate}(z,w)} \\
&\qquad = \Call{Aggregate}{\textsc{Aggregate}(x,w), \textsc{Aggregate}(z,y)} \\
&\qquad = \Call{Aggregate}{x,y,z,w}. 
\end{align*}
This is often the case in GNNs (see \cite{HAG} for more details).

\begin{algorithm}
        \caption{Abstract GNN aggregation~\cite{HAG}}	\label{alg:1}
                \begin{algorithmic}[1]
		\Require Graph $G = (V,E)$, depth $K$
		\State Initialize $\bh_v^{(0)}$ appropriately.
		\Comment{Typically, set $\bh_v^{(0)}$ to the feature vector $\mathbf{x}_v$.}
		\For{$k =  1,...,K$}
		\For{ $v \in V$}
			\State $a_v^{(k)} \gets  \Call{Aggregate}{\{\bh_u^{(k-1)}|u \in \inn(v)\}}$
			\State $\bh_v^{(k)} \gets \Call{Update}{(a_v^{(k)},\bh_v^{(k-1)})}$
		\EndFor
		\EndFor
			\end{algorithmic}

	\end{algorithm}

\subsection{Hierarchical Aggregation}
The starting point for our work is the paper \cite{HAG}, which showed that there are significant improvements to be made (up to 2.8x, empirically), by cutting out redundant computations in Algorithm~\ref{alg:1}.  To see where redundant computations might arise, suppose that two nodes $u,v \in G$ have a large shared out-neighborhood $\outn(u) \cap \outn(v)$.  In Algorithm~\ref{alg:1}, we would call $\textsc{Aggregate}$ on the nodes $u$ and $v$ many times, once for each node in this shared out-neighborhood.
However, we can save computation by introducing an \em intermediate node \em $m$ so that $\inn(m) = \inn(u) \cap \inn(v)$ and $\outn(m) = \outn(u) \cap \outn(v)$, and then disconnecting $u$ and $v$ from its original shared out-neighborhood.  
Then, we only call \textsc{Aggregate} on $u$ and $v$ once, and we can use the stored computation many times.
This process is shown in Figure~\ref{fig:introex}.

 \begin{figure}
\centering
 \subfloat[\label{fig:introex1}]{\includegraphics[width=.2\textwidth]{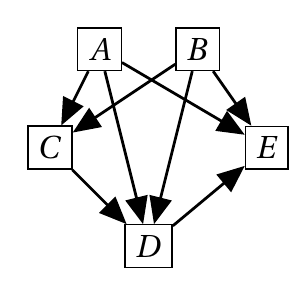}}
 \subfloat[\label{fig:introex2}]{\includegraphics[width=.3\textwidth]{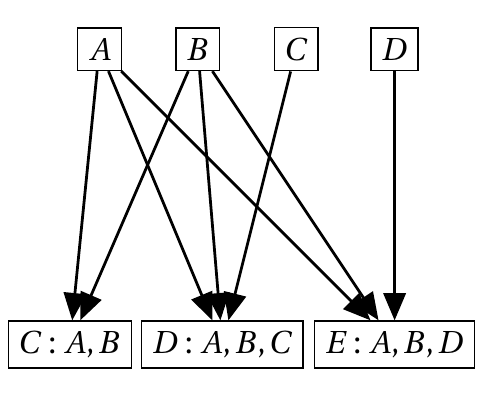}}
 \subfloat[\label{fig:introex3}]{\includegraphics[width=.3\textwidth]{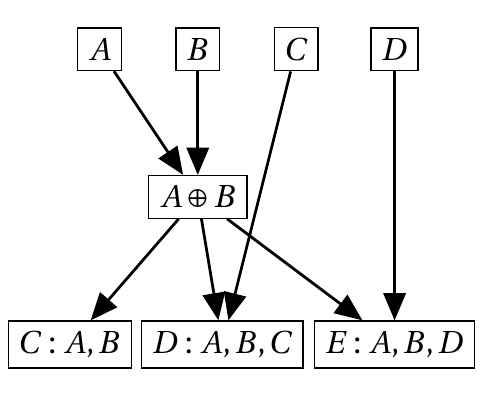}}

\caption{Example of hierarchical aggregation: (a) shows a directed graph $G = (V,E)$, (b) shows the GNN computation graph $\cG$, and (c) shows a possible HAG computation graph $\chG$, equivalent to $\cG$. 
The notation $[D:A,B,C]$ means that the node $D$ is requesting information from nodes $A,B,C$.  The notation $A \oplus B$ means that this intermediate node computes \textsc{Aggregate}$(A,B)$.
}\label{fig:introex}
\end{figure}

The Hierarchical Aggregation (HAG) problem is to find the best way to introduce such intermediate nodes.  We formally define the problem below.

\begin{definition}[GNN Computation Graph]
Given a directed graph $G = (V,E)$, the \textbf{GNN Computation Graph} $\cG$ for $G$ is a bipartite graph $(L,R,\cE)$, where $L$ and $R$ are copies of $V$, and, for $u \in L$ and $v \in R$, $(u_L,v_R) \in \cE$ if and only if $(u,v) \in E$.
We use $\inn(v)$ to denote the set of in-neighbors of a vertex $v$ in $\cG$, and we use $\outn(v)$ to denote the set of out-neighbors of $v$ in $\cG$.
\end{definition}

\begin{definition}[HAG Computation Graph]\label{def:hagcomputegraph}
Given a directed graph $G = (V,E)$, a \textbf{HAG Computation Graph} $\chG$ for $G$ is a graph $(\chV,\chE)$, where $\chV=L\cup M\cup R$ and $L$ and $R$ are copies of $V$.  $\chE$ contains directed edges from $L$ to $M$, from $M$ to $R$, and possibly within $M$, and the following property holds. For every directed edge $(u,v) \in E$, there is a unique directed path from $u_L$ to $v_R$ in $\chG$.
We use $\inh(v)$ to denote the set of in-neighbors of a vertex $v$ in $\chG$, and we use $\outh(v)$ to denote the set of out-neighbors edges of $v$ in $\chG$. 
When no edges in $\chE$ have both endpoints in $M$, so $\chG$ is tripartite, we call this a \emph{single-layer} HAG computation graph. When there exists integer $d$ such that for all $w \in M$, $|\inh(w)| = d$, then we call $\chG$ a \emph{d-HAG computation graph}.
\end{definition}
See Figure~\ref{fig:introex} for an example of a GNN computation graph and a HAG computation graph arising from a directed graph $G$.
We say that a GNN computation graph $\cG$ and a HAG computation graph $\chG$ are \textbf{equivalent} if they are both computation graphs for the same underlying graph $G$.  

We also use the $\cover$ function, as defined by~\cite{HAG}.
The cover of a vertex $v$ is just the set of all nodes in $L$ that eventually feed into it.
\begin{definition}
For a vertex $v$ in a HAG computation graph $\chG = (L \cup M \cup R, \chE)$, the \textbf{cover} of $v$ is defined as 
\[ \cover(v) = \{ w \in L \,:\, \text{there is a directed path from $w$ to $v$ in $\chG$}\}. \]
\end{definition}

We will subsequently  assume any HAG computation graph has the property that $\cover(m)$ is a distinct set for all distinct $m\in M$. 
This is without loss of generality, because if nodes $m_1$ and $m_2$ have the same cover, then they can perform the same function in the HAG graph and one of them can be removed.

Given a HAG computation graph, we can re-organize the computation in Algorithm~\ref{alg:1} in order to aggregate computations at the intermediate nodes in $M$.  This process is shown in Algorithm~\ref{alg:2}. Note that we need an ordering of $M$ such that for any $v \in M$, the vertices in $\inh(v)\cap M$ appear in the sequence before $v$. Since $\chG$ is a DAG, such a sequence can easily be constructed.
\begin{algorithm}
             \caption{Abstract GNN aggregation with added intermediate nodes~\cite{HAG}}
\label{alg:2}
                \begin{algorithmic}[1]
		\Require HAG Computation Graph $\chG = (\chV,\chE)$; depth $K$.
		\Require Sequence $\mathcal{M}=\{m_i\}_{i=1}^{|M|}$ for $M \subset \chV$ such that every $v\in M$ appears exactly once and after all its in-neighbors
		\State Initialize $\bh_v^{(0)}$ appropriately.
		\For{$k =  1,...,K$}
		\For{ $i=1,...,|M|$}
				\State $a_{m_i}^{(k)} \gets  \Call{Aggregate}{\{\bh_u^{(k-1)}|u \in \inh(m_i)\}}$
			\EndFor

		\For{ $v \in R$}
			\State $a_v^{(k)} \gets  \Call{Aggregate}{\{\bh_u^{(k-1)}|u \in \inh(v)\}}$
			\State $h_v^{(k)} \gets \Call{Update}{(a_v^{(k)},h_v^{(k-1)})}$
		\EndFor
		\EndFor
			\end{algorithmic}
	\end{algorithm}
	
In \cite{HAG}, the following cost function for a computation graph was considered.  We say that the cost of a computation graph $\cG$ with vertices $\mathcal{V}$ and right-hand side $R$ (either a HAG computation graph or a GNN computation graph) is 
\[ \mathrm{cost}(\cG) = c_{Agg}\sum_{w \in \mathcal{V}}(|\inn(w)|-1) + c_{Up}\cdot|R| \]
where $c_{Agg}$ and $c_{Up}$ are some constants representing the cost of an aggregation and an update respectively.
The reason for this cost function is that the cost to do an aggregation at a node $w \in V$ is proportional to the number of items in the aggregation, minus one. 
That is, one can ``aggregate'' a single item for free, and the cost grows linearly as we add more items.
The second term counts the cost of each update.  
We define to value of a HAG computation graph $\chG$ to be proportional to the amount of cost that it saves.
\begin{definition}
The \textbf{value} of a HAG Computation Graph $\chG = (\chV,\chE)$ is given by
\[ \val(\chG) = \frac{1}{c_{Agg}}\left(\mathrm{cost}(\cG) - \mathrm{cost}(\chG)\right) = \sum_{v \in M} \left[|\outh(v)|(|\cover(v)|-1)  -(|\inh(v)| - 1)\right]. \]
\end{definition}
To see that the two quantities are indeed equal, 
we may write
\begin{align*}
\frac{1}{c_{Agg}}\left(
\mathrm{cost}(\cG) - \mathrm{cost}(\chG)\right) &= \sum_{w \in R}(|\inn(w)| - 1) - \left( \sum_{v \in M} (|\inh(v)| - 1) + \sum_{w \in R}(|\inh(w)| - 1)\right)\\
&= \sum_{w \in R}\left[\left( \sum_{v \in \inh(w)}|\cover(v) |\right)- 1\right] -\left( \sum_{v \in M} (|\inh(v)| - 1) + \sum_{w \in R}(|\inh(w)| - 1)\right)\\
&= \sum_{w \in R}\left( \sum_{v \in \inh(w)\cap M}(|\cover(v) |-1)\right) -\left( \sum_{v \in M} (|\inh(v)| - 1) \right)\\
&= \sum_{v \in M} \left[|\outh(v)|(|\cover(v)|-1)  -(|\inh(v)| - 1)\right],
\end{align*}
where in the second line we have used the equivalence of $\cG$ and $\chG$ to say that $\inn(w)$ is equal to the disjoint union $\bigcup_{v \in \inh(w)} \cover(v)$, in the third we have combined summations over $w \in R$ and used the fact that $v \in \inh(w)\setminus M$ implies $v \in L$ and thus $|\cover(v)|=1$, and in the fourth we have switched the order of summations and used the fact that each $v$ in $\sum_{w \in R} \sum_{v \in \inh(w)}$ appears $|\outh(v)|$ times.
\subsection{The HAG Problem}
Given the above setup, we can formally define the HAG problem.  We additionally take two parameters $d$ and $k$.  The parameter $d$ is a bound on the left-degree of the aggregation nodes (for example, the work \cite{HAG} considered $d=2$ in their algorithm).  The parameter $k$ is a budget on the number of intermediate nodes allowed.

\begin{definition}[HAG problem]
Let $d$ be an integer.  The $d$-HAG problem is the following.  Given a graph $G$ and a node budget $k$, find a HAG computation graph $\chG = (\chV,\chE)$ for $G$ with the largest value, so that
$|M| \leq k$ and $|\inh(w)| = d$ for all $w \in M$.
\end{definition}

We also define a \em single-layer \em variation of the problem, which is to find the best way to add intermediate nodes in a way so that the resulting graph is tri-partite. The single layer variation is faster to compute and  we show empirically that single-layer solutions achieve almost as much value as general multi-layer solutions. 

\begin{definition}[single-layer HAG problem]
Let $d$ be an integer. The single-layer $d$-HAG problem is defined as the $d$-HAG problem with the additional constraint that $\chG$ be tripartite.
\end{definition}
We note that if $\chG$ is a single-layer $d$-HAG computation graph, value can be simplified: $value(\chG)=\sum_{v \in M} (|\outh(v)| - 1)(|\inh(v)| - 1)$.

\section{Greedy Algorithms}\label{sec:algs}
We study two natural greedy algorithms for the HAG problem.  We call these two algorithms \greedyone\ and \greedytwo.  Intuitively, \greedyone \ greedily choose an internal node, with all of its incoming and outgoing edges, and fixes it.  On the other hand, \greedytwo\ greedily chooses an internal node with all of its incoming edges, but re-optimizes the outgoing edges when each new internal node is added. That is, \greedyone\ is ``fully'' greedy in the sense that it makes a greedy choice for every edge, while \greedytwo\ is only ``partially'' greedy in the sense that it makes a greedy choice for the incoming edges, subject to fully optimizing over the outgoing edges.

To formally describe these algorithms, we define an additional function on HAG computation graphs.  
\begin{definition}
Given HAG computation graph $\chG=(\chV,\chE)$ with $\chV=  L\cup M \cup R$, for $X,Y \in \{L,M,R\}$ let $T_{\chG}(X,Y)$ denote the set edges in $\chG$ that either connect $X$ and $Y$ in $\chG$, or connect $Y$ to $Y$ in $\chG$:
\[
T_{\chG}(X,Y):=(\chE \cap (X\times Y)) \cup (\chE \cap (Y \times Y)).\]
\end{definition}

We begin with the algorithm \greedyone.  This algorithm was proposed by \cite{HAG}, and works as follows.  At each step, it chooses the internal node---complete with all ingoing and outgoing edges---that will increase $\val(\chG)$ by the most.  This is shown in Algorithm~\ref{alg:greedy1}.
\begin{algorithm}
                \caption{Greedy Algorithm \greedyone}\label{alg:greedy1}
                \begin{algorithmic}[1]
		\Require GNN Computation Graph $\cG = (L,R,\cE)$; aggregation node limit $k$, aggregation in-degree $d$.
		\State $M_0 \gets \emptyset$
		\State $\chE_0 \gets \cE$
		\State $\chG_i \gets (L \cup M_0\cup R,\chE_0)$
		\For{$i =  1,...,k$}
			\State $C \gets \argmax_{C\subseteq L\cup M_{i-1}\text{ s.t. }|C|=d} | \bigcap_{v\in C}\outh(v) \cap R|$

\Comment{Find the set $C$ of size $d$ to maximize the number of nodes in $R$ that request all of the nodes in $C$.}
			\State $R_C \gets \bigcap_{v\in C}\outh(v) \cap R$
			\State $M_i \gets M_{i-1} \cup \{v_i\}$ \Comment{add a new vertex $v_i$ to $M$}
			\State Construct the new edge set $\chE_i$:
		\begin{itemize}
			\item $\chE_i \gets \chE_{i-1}$
			\item Add edge  $(\ell, v_i)$ to $\chE_i$ for all $\ell \in C$.
			\item Add edge $(v_i, r)$ to $\chE_i$ for all $r \in R_C$.
			\item Remove any edges $(\ell, r)$ from $\chE_i$ with $\ell \in C$ and $r \in R_C$.
		\end{itemize}
			\State $\chG_i \gets (L\cup M_i\cup R,\chE_i)$
		\EndFor
			\end{algorithmic}
	\end{algorithm}

We next consider a greedy algorithm, \greedytwo, in which the edges between the intermediate nodes and receiving nodes are re-assigned at each iteration. In particular, at the $i^{th}$ step the edge set $T_{\chG_i}(M_i,R)$ is chosen to be optimal given $M_i$ and $T_{\chG_i}(L,R)$, rather than constructed by adding edges to the set $T_{\chG_{i-1}}(M_{i-1},R)$ from the previous step. Algorithm~\ref{alg:greedy2} describes this process. 
\begin{algorithm}
                \caption{Greedy Algorithm \greedytwo}\label{alg:greedy2}
                \begin{algorithmic}[1]
		\Require GNN Computation Graph $\cG = (L,R,\cE)$; aggregation node limit $k$, aggregation in-degree $d$.
		\State $M_0 \gets \emptyset$ 		\State $\chE_0 \gets \cE$
		\State $\chG_0 \gets (L \cup M_0\cup R,\chE_0)$
		\For{$i =  1,...,k$}
			\State Suppose that $\chG_{i-1}$ has vertices $\chV_{i-1} = L \cup M_{i-1} \cup R$.
			\For{$C \subseteq L\cup M_{i-1}$ s.t. $|C|=d$}\label{algline:inset}
				\State $M_i \gets M_{i-1} \cup \{v_i\}$
			\Comment{Add a new vertex $v_i$}
					\State 

$$ \mathcal{S}_C = \left\{\chG^{(C)}=(L \cup M_i \cup R,\chE^{(C)}): \begin{minipage}{7cm} \begin{center} $\chG^{(C)}$ is a d-HAG computation graph equivalent to $\cG$ and $T_{\chG^{(C)}}(L,M_i) = T_{\chG_{i-1}}(L,M_{i-1}) \cup (C \times \{v_i\})$ \end{center} \end{minipage} \right\}$$

\Comment{$\mathcal{S}_C$ is the set of all graphs $\chG^{(C)}$ that extend the left-hand side $T_{\chG_{i-1}}(L,M)$ of $\chG_{i-1}$ by adding an intermediate node $v$ with $\inn(v) = C$.}

					\State $\chG^{(C)}_{opt} \gets \argmax_{\chG^{(C)} \in \mathcal{S}_C} \val(\chG^{(C)})$  \label{algline:matchstep}
			\EndFor
			\State $\chG_i \gets \argmax_C \val(\chG^{(C)}_{opt})$

		\EndFor
			\end{algorithmic}
	\end{algorithm}
	\begin{remark}\label{rem:singlelayer}
	We note that both \greedyone\ and \greedytwo\ can be easily modified to find a single-layer solution.  In \greedyone\ (Algorithm~\ref{alg:greedy1}), we simply take the $\argmax$ over $L$ instead of $L \cup M_{i-1}$.  In \greedytwo\ (Algorithm~\ref{alg:greedy2}), 
we replace ``$C_j \subseteq L\cup M_{i-1}\text{ s.t. }|C_j| = d$" in Line~\ref{algline:inset} with ``$C_j \subseteq L\text{ s.t. }|C_j|=d$''. 
	\end{remark}

In the next two sections, we analyze the efficiency and approximation guarantees of both \greedyone\ and \greedytwo.

\section{Efficiency of \greedyone\ and \greedytwo}\label{sec:efficiency}
In this section, we discuss the efficiency of the two greedy algorithms presented above.
We note that \greedyone\ (Algorithm~\ref{alg:greedy1}) is clearly polynomial time if $d$ is constant.  In particular, the argmax can be naively implemented in time $O(n^d)$.
	
On the other hand,
it is not clear that \greedytwo\ (Algorithm~\ref{alg:greedy2}) is even polynomial time (in $n$), because it is not clear how to solve the optimization problem in line~\ref{algline:matchstep}.   However, we show that in fact this can be re-cast as a matching problem in hypergraphs, which is efficient in certain parameter regimes.
To do this, we need a few more definitions.

\begin{definition}\label{def:partialHAG}
Let $\chG = (L \cup M \cup R,\chE)$ be a HAG computation graph.  We define the \em partial HAG computation graph \em induced by $\chG$ to be $\chP = (L \cup M, T_{\chG}(L,M))$, the induced subgraph on the vertices $L \cup M$.

Given a partial HAG computation graph $\chP$, and a GNN computation graph $\cG = (L \cup R, \cE)$, let $\mathcal{S}(\chP,\cG)$ denote the set of HAG computation graphs $\chG$ on the vertices $L\cup M \cup R$, so that:
\begin{itemize}
\item[(a)] $\chG$ is equivalent to $\cG$, and
\item[(b)] $\chP$ is a partial computation graph induced by $\chG$.
\end{itemize}
\end{definition}

In this language, the $\argmax$ in Line \ref{algline:matchstep} of Algorithm~\ref{alg:greedy1} is maximizing over the set $\mathcal{S}(\chP^{(C)},\cG)$, where $\chP^{(C)}$ is the partial HAG computation graph induced by $\chG_{i-1}$ with an additional intermediate vertex $v$ with $\inh(v) = C$.

Below, we show that efficiently computing this $\argmax$ is equivalent to solving a hypergraph matching problem.

\begin{definition}\label{def:H}
Let $\cG = (L \cup R, \cE)$ be a GNN computation graph, 
and let $\chP$ be a partial HAG computation graph with vertices $L \cup M$.
Then for $r \in R$, define $H_r = H_r(\chP, \cG)$ to be the hypergraph with vertices $L$ and edges 
\[   \{ \cover(v)\, :\, v \in M \text{ and } \cover(v) \subseteq \inn(r) \}. \]
For an edge $e = \cover(v)$ of $H_r$, define the \em  weight \em of $e$ to be $|cover(v)|-1$.  

Let $H = H(\chP,\cG)$ be the disjoint union of the $H_r$, for $r \in R$.  (That is, the vertices of $H$ are $|R|$ disjoint copies of $L$, and the edges on the $r^{th}$ copy correspond to the edges in $H_r$.)
\end{definition}

\begin{lemma}\label{lem:reduction}\label{lem:bijection}
Let $\cG = (L \cup R, \cE)$ be a GNN computation graph, and 
let $\chP$ be a partial HAG computation graph.
Let $H = H(\chP,\cE)$ be as in Definition~\ref{def:H}.

Let $\mathcal{M}(H)$ denote the set of matchings in $H$.  Then there is a bijection 
\[ \varphi: \mathcal{M}(H) \to \mathcal{S}(\chP,\cG), \] 
so that for a matching $\mathcal{N} \in \mathcal{M}(H)$, 
\[ \val( \varphi(\mathcal{N}) ) = \val( \mathcal{N} ) - c(\chP), \]
where the value of a matching is defined as the sum of the weights of the edges in that matching, and where $c(\chP)$ is a constant that depends only on the partial HAG graph $\chP$.
When $\chP$ is a partial $d$-HAG graph with $k$ intermediate nodes, $c(\chP) = k(d-1)$.

In particular, if $\mathcal{N}$ is a maximum weighted hypergraph matching for $H$, then $\varphi(\mathcal{N})$ is a maximum value HAG computation graph in $\mathcal{S}(\chP,\cG)$.
\end{lemma}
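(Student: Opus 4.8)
The plan is to build the bijection $\varphi$ by hand and then check the two assertions---that it is a bijection and that it shifts value by $c(\chP)$---one at a time. The essential preliminary is to unpack what membership in $\mathcal{S}(\chP,\cG)$ actually imposes. Because $\chP$ already fixes every edge inside $L\cup M$ (both the $L$-to-$M$ and the $M$-to-$M$ edges), it fixes the vertex set $M$ and the cover $\cover(v)$ of each $v\in M$; the only remaining freedom in producing a $\chG\in\mathcal{S}(\chP,\cG)$ is the choice of edges entering each $r\in R$. I would first argue that equivalence to $\cG$ (a unique directed path $u_L\to v_R$ for every $(u,v)\in E$) holds if and only if, for each $r\in R$, the family $\{\cover(w):w\in\inh(r)\}$ is a partition of $\inn(r)$: existence of a path forces every element of $\inn(r)$ to be covered, uniqueness forces it to be covered exactly once, and the $L$-vertices reaching $r$ through a given in-neighbor $w$ are precisely $\cover(w)$.

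With this in hand, a feasible extension amounts to choosing, independently for each $r$, a pairwise-disjoint subfamily of $\{\cover(v):v\in M,\ \cover(v)\subseteq\inn(r)\}$; the vertices of $\inn(r)$ left uncovered are then forced to connect directly by an $L$-to-$R$ edge. By Definition~\ref{def:H} such a subfamily is exactly a matching in $H_r$, and since the $H_r$ are vertex-disjoint, a joint choice over all $r$ is exactly a matching $\mathcal{N}$ in $H$. I would therefore define $\varphi(\mathcal{N})$ by starting from $\chP$ and, for every edge $\cover(v)$ of $\mathcal{N}$ lying in the copy $H_r$, adding the edge $(v,r)$, and finally adding $(\ell,r)$ for each $\ell\in\inn(r)$ covered by no selected edge. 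One checks that $\varphi(\mathcal{N})$ is a genuine HAG computation graph (it is acyclic, as only edges into the sink layer $R$ are added, and it is equivalent to $\cG$ by the partition property) that induces $\chP$, so $\varphi$ lands in $\mathcal{S}(\chP,\cG)$. The inverse sends $\chG$ to the matching whose edges in $H_r$ are $\{\cover(v):v\in\inh(r)\cap M\}$; the partition property makes these disjoint, and the standing assumption that distinct $m\in M$ have distinct covers ensures each selected edge names a unique $v$, so the two maps are visibly mutually inverse.

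Finally I would verify the value identity by splitting $\val(\chG)=\sum_{v\in M}|\outh(v)|(|\cover(v)|-1)-\sum_{v\in M}(|\inh(v)|-1)$. The second sum involves only the in-neighborhoods of $M$, which are frozen by $\chP$, so it is the constant $c(\chP)$; when $\chP$ is a partial $d$-HAG graph each $|\inh(v)|=d$ and it equals $k(d-1)$. For the first sum I would double count: under $\varphi$ the number of right-hand nodes $r$ to which $v$ is joined---equivalently, the out-degree into $R$ that appears in the value formula---equals the number of copies $H_r$ in which $\cover(v)$ was selected, so $\sum_{v\in M}|\outh(v)|(|\cover(v)|-1)=\sum_{r\in R}\sum_{\cover(v)\in\mathcal{N}_r}(|\cover(v)|-1)=\sum_{e\in\mathcal{N}}w(e)=\val(\mathcal{N})$, using that $|\cover(v)|-1$ is by definition the weight of the edge $\cover(v)$. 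Combining the pieces yields $\val(\varphi(\mathcal{N}))=\val(\mathcal{N})-c(\chP)$, and the ``in particular'' claim follows at once because $c(\chP)$ is independent of $\mathcal{N}$. I expect the first step---cleanly equating the unique-path/equivalence requirement with the per-$r$ partition condition, so that feasible extensions correspond to \emph{matchings} rather than arbitrary edge subsets of $H$---to be the crux; once that is pinned down, the well-definedness of $\varphi$, its inverse, and the value bookkeeping are all routine.
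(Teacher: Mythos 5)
Your proposal is correct and follows essentially the same route as the paper's proof: the same bijection (select $\cover(v)$ in the copy $H_r$ exactly when the edge $(v,r)$ is present, with leftover vertices of $\inn(r)$ wired directly to $r$), the same inverse, and the same split of $\val(\chG)$ into a matching-weight sum plus the $\chP$-determined constant $\sum_{v\in M}(|\inh(v)|-1)$. Your only organizational difference is front-loading the ``equivalence iff the covers of $\inh(r)$ partition $\inn(r)$'' characterization, which the paper establishes implicitly while verifying that $\varphi$ is well-defined and invertible.
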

\begin{proof}
We define the bijection $\varphi$ as follows.  Let $\mathcal{N}$ be a matching in $H$, and let $\mathcal{N}_r$ denote the restriction of $\mathcal{N}$ to $H_r$, recalling that $H$ is the disjoint union of $H_r$ for $r \in R$.
Suppose that the edges in $\mathcal{N}_r$ correspond to sets $\cover(v)$ for $v \in C_r$, for some set $C_r$.  (Notice that the edges in $\mathcal{N}_r$ will have this form by the definition of $H_r$.)  Then define $\varphi(\mathcal{N})$ to be the HAG computation graph $\chG$ so that the partial HAG computation graph induced by $\chG$ is $\chP$, and so that 
\begin{equation}\label{eq:added}
 \inh(r) = C_r \cup \left( \inn(r) \setminus \bigcup_{v \in C_r} \cover(v) \right) 
\end{equation}
for $r \in R$.  Notice that $\chP$ sets the edge structure between $L$ and $M$ and within $M$, so specifying $\inh(r)$ for each $r \in R$ completes the description of $\chG$. 

We now verify that $\chG = \varphi(\mathcal{N})$ is an element of $\mathcal{S}(\chP, \cG)$.  First, by construction it induces $\chP$ as a partial HAG graph.  
Second, $\chG = (L \cup M \cup R, \chE)$ is a HAG computation graph that is equivalent to $\cG = (L \cup R, \cE)$.  To see this, consider any edge $(\ell, r) \in \cE$.  We need to show that there is a unique path from $\ell$ to $r$ in $\chG$.  This is true because either $\ell$ is contained in exactly one set $\cover(v)$ for $v \in \inn(r)$, in which case the path is the one that goes through $v$; or $\ell$ is not in any sets $\cover(v)$, in which case the edge $(\ell, r)$ is added to $\chE$ by definition in \eqref{eq:added}.  It cannot be the case that $\ell$ is contained in $\cover(v)$ for multiple $v \in \inn(r)$, because $\mathcal{N}_r$ was a matching.

Next, we show that $\varphi$ is a bijection.  To see this, let $\chG \in \mathcal{S}(\chP, \cG)$.  Then observe that $\varphi^{-1}(\chG)$ is given by the matching $\mathcal{N}$ that is the disjoint union of matchings $\mathcal{N}_r$ for $r \in R$, so that $\mathcal{N}_r$ includes the edges $\cover(v)$ for $v \in \inh(r) \cap M$.

Finally, we establish the claim about the values of $\mathcal{N}$ and $\varphi(\mathcal{N})$.  
Let $\mathcal{N} = \varphi^{-1}(\chG)$ for some $\chG \in \mathcal{S}(\chP, \cG)$.
By the definition of the weights, and by the construction of $\mathcal{N}$, we have
\[ \val(\mathcal{N}) = \sum_r \sum_{v \in \inh(r) \cap M} ( |\cover(v)| - 1 ). \]
On the other hand, by the definition of the value of a HAG computation graph, we have
\begin{align*}
 \val(\chG) &= \sum_{v \in M} \left[|\outh(v)|(|\cover(v)|-1)  -(|\inh(v)| - 1)\right] \\
&= \sum_{v \in M}|\outh(v)|(|\cover(v)|-1) - \sum_{v \in M} (|\inh(v)|-1) \\
&= \sum_{r \in R} \sum_{v \in \inh(r) \cap M} (|\cover(v)|-1) - \sum_{v \in M}(|\inh(v)| - 1) \\
&= \val(\mathcal{N}) - c(\chP), 
\end{align*}
where we define $c(\chP) = \sum_{v \in M}(|\inh(v)| - 1)$, which we note depends only on the partial HAG graph $\chP$.
In particular, when $\chP$ is a partial $d$-HAG with $k$ intermediate nodes, $c(\chP) = k(d-1)$.
\end{proof}

As a corollary, when $d = O(1)$ is a constant, we see that \greedytwo\ (Algorithm~\ref{alg:greedy2}) can be implemented using a polynomial number of maximum weighted-hypergraph matching problems.  In particular, when $d=2$ or when $\mathrm{deg}(G)$ (the degree of the underlying graph) is constant, we can implement Algorithm~\ref{alg:greedy2} in polynomial time.

\begin{thm}\label{thm:runningtime}
Suppose that either:
\begin{itemize}
\item $d = 2$, and Algorithm~\ref{alg:greedy2} is restricted to a single layer (see Remark~\ref{rem:singlelayer}); or
\item $d = O(1)$ and $\deg(G) = O(1)$, where $\deg(G)$ is the maximum degree of the original graph $G$. 
\end{itemize}
Then Algorithm~\ref{alg:greedy2} can be implemented in polynomial time.
\end{thm}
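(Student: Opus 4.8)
The plan is to use the reduction from Lemma~\ref{lem:reduction} to recast the only nontrivial step of \greedytwo---the optimization in Line~\ref{algline:matchstep} of Algorithm~\ref{alg:greedy2}---as a maximum weighted hypergraph matching problem, and then to argue that in each of the two parameter regimes this matching problem has enough structure to be solved in polynomial time.

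First I would account for the outer structure of the algorithm. \greedytwo\ runs for $k$ iterations, and in iteration $i$ it loops over all $C \subseteq L \cup M_{i-1}$ with $|C| = d$ (Line~\ref{algline:inset}). Since $d = O(1)$ and $|L \cup M_{i-1}| \le n + k$, there are only $O((n+k)^d)$ such sets $C$, a polynomial number. For each $C$, the set $\mathcal{S}_C$ maximized over in Line~\ref{algline:matchstep} is exactly $\mathcal{S}(\chP^{(C)}, \cG)$, where $\chP^{(C)}$ is the partial HAG graph obtained from $\chG_{i-1}$ by adding one intermediate node $v$ with $\inh(v) = C$. By Lemma~\ref{lem:reduction}, $\val(\varphi(\mathcal{N})) = \val(\mathcal{N}) - c(\chP^{(C)})$, and since $c(\chP^{(C)})$ is independent of the matching $\mathcal{N}$, maximizing $\val$ over $\mathcal{S}(\chP^{(C)}, \cG)$ is equivalent to finding a maximum weighted matching $\mathcal{N}$ in $H(\chP^{(C)}, \cG)$ and returning $\varphi(\mathcal{N})$. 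Constructing $H(\chP^{(C)}, \cG)$---computing each $\cover(v)$ by reachability and testing containment in each $\inn(r)$---takes polynomial time. Hence the entire algorithm runs in polynomial time provided each maximum weighted matching instance can be solved in polynomial time, which is the crux.

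Next I would treat the two regimes separately, since maximum weighted hypergraph matching is NP-hard in general and we must exploit the structure of each case. In the first case ($d = 2$, single layer), every intermediate node $v$ has $\inh(v) \subseteq L$ with $|\inh(v)| = d$, so $\cover(v) = \inh(v)$ has size exactly $2$; thus every edge of $H(\chP^{(C)},\cG)$ has exactly two endpoints, i.e.\ $H$ is an ordinary weighted graph, and its maximum weighted matching is computable in polynomial time (e.g.\ by Edmonds' blossom algorithm for maximum weight matching in general graphs). In the second case ($d = O(1)$ and $\deg(G) = O(1)$), I would use that $H$ is the disjoint union of the $H_r$ over $r \in R$, so a maximum weighted matching of $H$ is the union of maximum weighted matchings of the individual $H_r$. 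Each edge of $H_r$ is a set $\cover(v) \subseteq \inn(r)$ with $|\inn(r)| \le \deg(G) = O(1)$, and since distinct intermediate nodes have distinct covers, $H_r$ has at most $2^{\deg(G)} = O(1)$ edges supported on at most $\deg(G)$ vertices. Consequently $H_r$ admits only a constant number of matchings, and its optimum can be found by brute force in constant time; summing over the $|R| \le n$ components solves the matching instance in polynomial time.

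The main obstacle is precisely that general weighted hypergraph matching is intractable, so the content lies in identifying why each regime escapes it: the $d = 2$ single-layer restriction collapses the hypergraph into an ordinary graph (exploitable via blossom matching), while bounded degree shrinks each component $H_r$ to constant size (exploitable via brute force). Everything else---the polynomial number of candidate sets $C$, the polynomial-time construction of $H$, and the $k$ outer iterations---is routine bookkeeping once Lemma~\ref{lem:reduction} is available.
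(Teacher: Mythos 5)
Your proposal is correct and follows essentially the same route as the paper's proof: invoke Lemma~\ref{lem:reduction} to reduce Line~\ref{algline:matchstep} to maximum weighted hypergraph matching, observe that the $d=2$ single-layer case collapses $H$ to an ordinary graph solvable by Edmonds' algorithm, and handle the bounded-degree case by brute-forcing each constant-size component $H_r$ separately. The only difference is that you stop at ``polynomial time'' where the paper tallies explicit bounds ($O(k^2(n+k)^2n^3)$ and $O(k\cdot n\cdot(n+k)^d)$ respectively), which the theorem statement does not require.
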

\begin{proof}
When $d=2$ and \greedytwo\ is set to return a single layer graph, the associated hypergraph $H$ is just a graph with at most $n^2$ vertices and at most $kn$ edges; indeed, there are at most $n$ vertices and $k$ edges for each $H_r$, and $H$ is the disjoint union of the $H_r$ over at most $n$ vertices $r \in R$.
The problem of maximum weight matching in a graph can be solved using Edmond's algorithm in time $O(|V|^2 |E|)$ for a graph with $|V|$ vertices and $|E|$ edges.  Thus, by Lemma~\ref{lem:reduction}, the $\argmax$ in Line~\ref{algline:matchstep} of Algorithm~\ref{alg:greedy2} can done in time $O(n^3k)$.
Algorithm \ref{alg:greedy2} needs to call this algorithm $O(k \cdot (n+k)^2)$ times,  for each $i=1,\ldots,k$ and for each $C \subseteq L \cup M_{i-1}$ of size $d=2$.  Thus, the total running time is $O( k^2 (n + k)^2 n^3)$.

When $d > 2$ or \greedytwo\ is set to return a multi-layer graph, then the reduction from Lemma~\ref{lem:reduction} yields a weighted hypergraph maximum matching problem, which unfortunately is NP-hard.  However, when the degree $\mathrm{deg}(G)$ of the underlying graph (and hence of $\cG$) is a constant, then this decomposes into $n$ weighted hypergraph maximum matching problems, one for each $H_r$, and the number of vertices in $H_r$ is $|\inn(r)| \leq \mathrm{deg}(G) = O(1)$.   Therefore we can solve a maximum weighted hypergraph problem in $H_r$ by brute force in time $O(1)$. There are at most $n$ such problems, one for each $r$, and as above we solve each of them at most $k\cdot (n+k)^d$ times, yielding a running time of $O( k \cdot n \cdot (n + k)^d )$, where the $O(\cdot)$ notation is hiding dependence on $\deg(G)$.  
\end{proof}

\section{Approximation Guarantees for Single-Layer HAGs}\label{sec:approx}
In this section, we consider the approximation guarantees that can be obtained by \greedyone\ and \greedytwo.

\subsection{Approximation ratios for \greedyone}
We begin with \greedyone.
The work \cite{HAG} introduced \greedyone\ and claimed that it gives a $1 - 1/e$ approximation, in the sense that $\val(\chG_{greedy}) \geq \left(1 - \frac{1}{e}\right) \val(\chG_{opt})$, where $\chG_{opt}$ is the HAG computation graph of maximum value.  Unfortunately, as the example in Figure~\ref{fig:g1andsep} shows, this is not correct, and we cannot hope for better than a $1/2$ approximation.

\begin{figure}
\centering
 \subfloat[\label{fig:submodprob}]{\includegraphics[width=.3\textwidth]{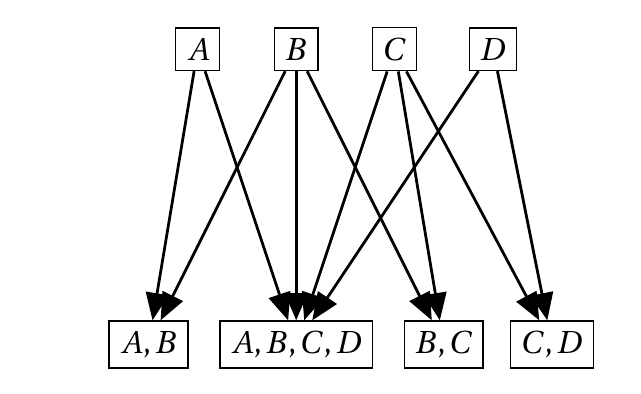}}
\ \ \ 
 \subfloat[\label{fig:submod1plus}]{\includegraphics[width=.3\textwidth]{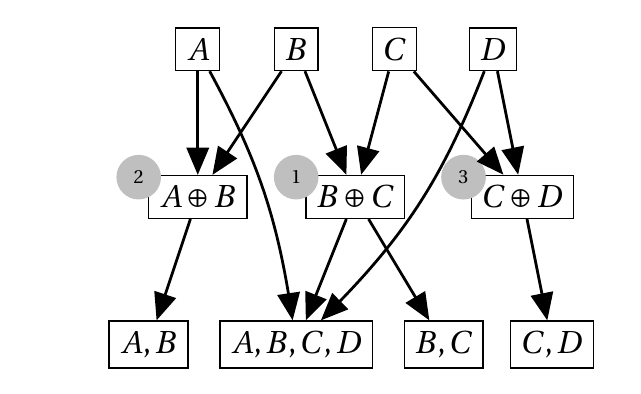}}
\ \ \ 
  \subfloat[\label{fig:submod2plus1}]{\includegraphics[width=.3\textwidth]{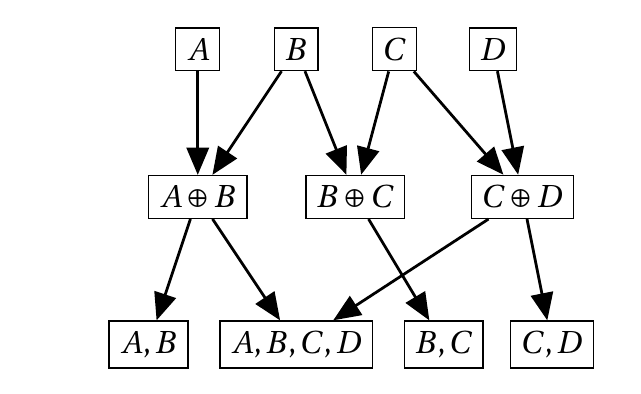}}

\caption{Example of a GNN computation graph (a) demonstrating that \greedyone\ cannot do better than a 1/2 approximation (in this example, for $k=3$) and that \greedytwo\ can strictly outperform \greedyone. The algorithm \greedyone\ will arrive at the solution shown in (b) by choosing the internal nodes in the order indicated: $B\oplus C$, $A\oplus B$, $C\oplus D$. The optimal solution for $k=3$ is shown in (c). The solution from \greedyone\ in (b) achieves a value of $1$ while the optimal solution has a value of $2$. There is a strict separation between \greedyone\ and \greedytwo\ because \greedytwo\ will reach the solution in (c) even if it chooses the internal nodes in the same order as \greedyone\ shown in (b).
}\label{fig:g1andsep}

\end{figure}

\begin{figure}
\centering
 \subfloat[\label{fig:submodprob}]{\includegraphics[width=.3\textwidth]{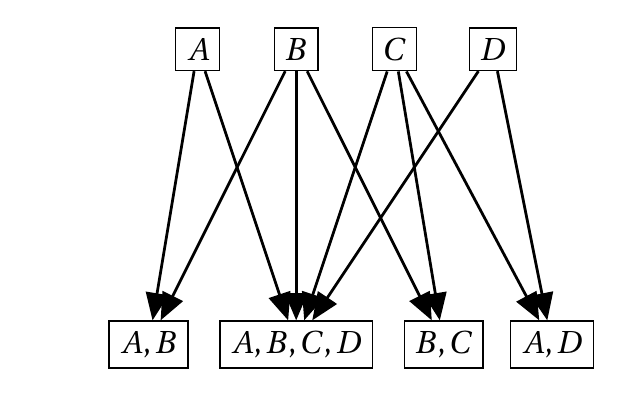}}
\ \ \ 
 \subfloat[\label{fig:submod1plus}]{\includegraphics[width=.3\textwidth]{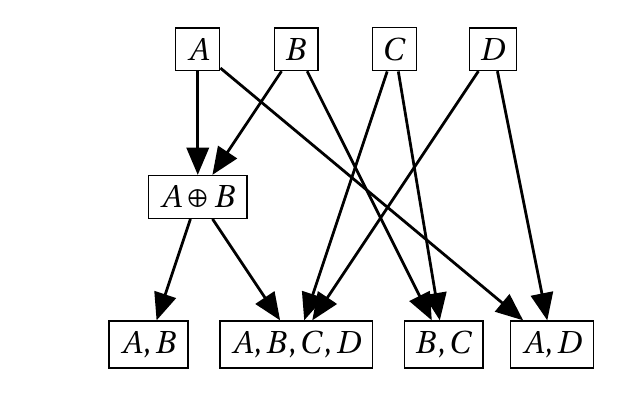}}
\ \ \ 
  \subfloat[\label{fig:submod2plus1}]{\includegraphics[width=.3\textwidth]{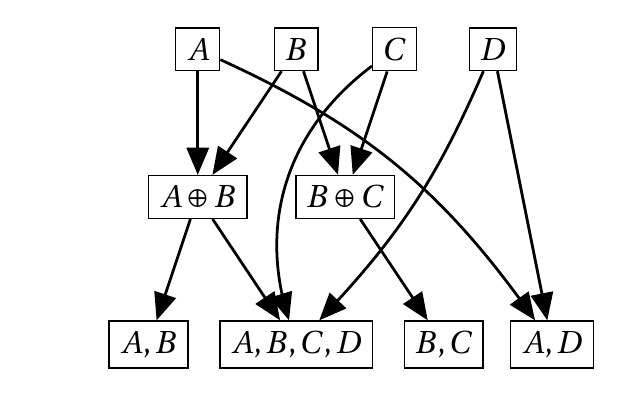}}\\
    \subfloat[\label{fig:submod2plus1}]{\includegraphics[width=.3\textwidth]{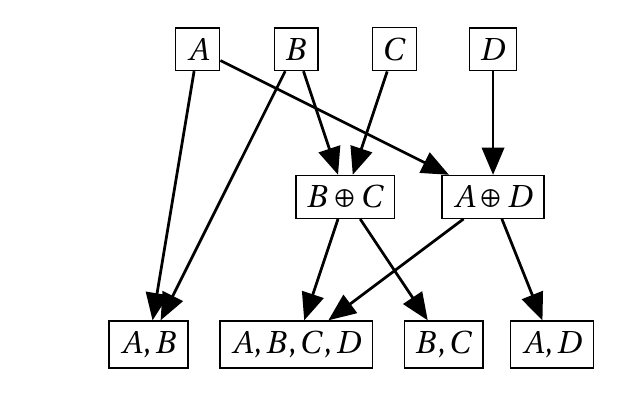}}
\ \ \
  \subfloat[\label{fig:submod2plus1}]{\includegraphics[width=.3\textwidth]{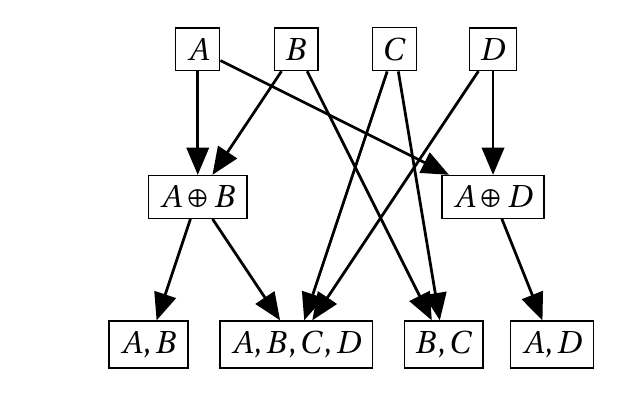}}
  \subfloat[\label{fig:submod2plus1}]{\includegraphics[width=.3\textwidth]{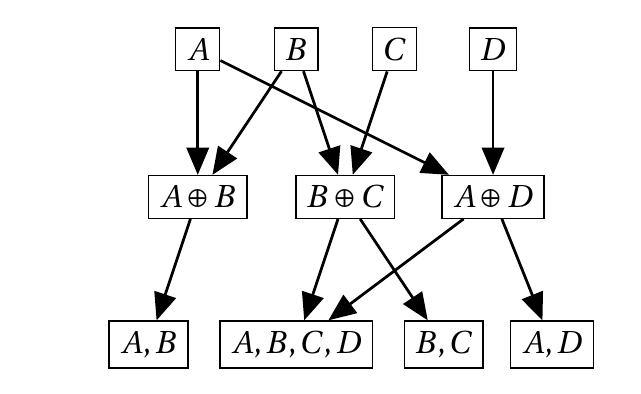}}

\caption{Example of a GNN computation graph (a) demonstrating that \greedytwo\ cannot do better than a 1/2 approximation (in this example, for $k=2$), and that the objective function for \greedytwo\ is not submodular. The algorithm \greedytwo\ will arrive at the solution in (c) by choosing $A\oplus B$ (as shown in (b)) and then $B \oplus C$, arriving at a value of $1$. The optimal solution for $k=2$ is shown in (d), and has a value of $2$. We see from (e) and (f) that the objective function for \greedytwo\ is not submodular, in the sense that adding an internal node $A\oplus D$ is more valuable after $A\oplus B$ and $B\oplus C$ have been added than when just $A \oplus B$ has been added. In (b) the value is $1$ and adding $A\oplus D$ to get (e) leaves the value at $1$. In (c) the value is $1$ and adding $A\oplus B$ to get (f) increases the value to $2$.
}\label{fig:notsubmod}
\end{figure}

In this section, we analyze \greedyone\ (Algorithm~\ref{alg:greedy1}), in the single-layer case.  Our main theorem is the following.
%\mkw{added theorem statement up here so we can lead with it.}
\begin{thm}\label{thm:mainapprox}
For a $d$-HAG computation graph $\chG$ with $k$ internal nodes, 
define 
\[ \widetilde{\val}(\chG) := \val(\chG) + k(d-1). \]
Then the single-layer $d$-HAG computation graph returned by \greedyone\ (Algorithm~\ref{alg:greedy1}, with a restriction to single-layer; see Remark~\ref{rem:singlelayer}) $\chG_{greedy}$ satisfies
\[ \widetilde{\val}(\chG_{greedy}) \geq \frac{1}{d}\left(1 - \frac{1}{e}\right) \widetilde{\val}(\chG^*), \]
where $\chG^*$ is the $d$-HAG computation graph with the largest value (and also the largest $\widetilde{\val}$).
\end{thm}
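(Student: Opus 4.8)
The plan is to reduce the claim to a clean greedy-covering recurrence, exploiting that in the single-layer case the objective $\widetilde{\val}$ is, up to the factor $d-1$, just the total out-degree of the intermediate nodes. Concretely, for a single-layer $d$-HAG we have $\val(\chG)=(d-1)\sum_{v\in M}(|\outh(v)|-1)$, so that $\widetilde{\val}(\chG)=\val(\chG)+k(d-1)=(d-1)\sum_{v\in M}|\outh(v)|$; equivalently, by Lemma~\ref{lem:reduction} this equals the weight of the associated matching, all of whose edges have weight $d-1$. Writing $F(\chG):=\sum_{v\in M}|\outh(v)|$ and $F^*:=F(\chG^*)$, it therefore suffices to prove that \greedyone\ returns a graph with $F$-value at least $\tfrac1d(1-1/e)F^*$, and then multiply through by $d-1$.

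First I would describe the state of \greedyone\ after $i-1$ steps. Each intermediate node $v_t$ it selects contributes $|\outh(v_t)|=|R_{C_t}|$ to $F$, where $R_{C_t}$ is the set of $r\in R$ still joined to all of $C_t$ at the moment $v_t$ is added, and upon adding $v_t$ the algorithm deletes exactly the $d\,|R_{C_t}|$ edges $\{(\ell,r):\ell\in C_t,\ r\in R_{C_t}\}$. A short check shows these deleted edge-sets are disjoint across steps, since an edge once deleted can never again lie in any $R_C$. Hence after $i-1$ steps exactly $d\,g_{i-1}$ edges of $\cG$ have been deleted, where $g_{i-1}=\sum_{t<i}|R_{C_t}|$ is the current $F$-value.

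The crux is to lower bound the marginal gain $|R_{C_i}|$ of step $i$ against the optimum. Let $C_1^*,\dots,C_k^*$ be the intermediate sets of $\chG^*$ together with their routings; since distinct sets routing through a common $r$ must be disjoint (the uniqueness-of-path property, and $\cover(v^*_j)=C_j^*$ in the single-layer case), while routings through distinct $r$ use edges with distinct right endpoints, all $F^*$ routings of $\chG^*$ are pairwise edge-disjoint and occupy $d F^*$ distinct edges. Consequently each edge deleted by greedy destroys at most one $\chG^*$-routing, so at least $F^*-d\,g_{i-1}$ of them survive in the current graph; every surviving routing $(C_j^*,r)$ witnesses $r\in R_{C_j^*}$, giving $\sum_{j}|R_{C_j^*}|\ge F^*-d\,g_{i-1}$. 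Because \greedyone\ maximizes $|R_C|$ over all admissible $d$-sets $C$ (in particular over the $C_j^*$), averaging yields $|R_{C_i}|\ge \max_j |R_{C_j^*}|\ge \tfrac1k\bigl(F^*-d\,g_{i-1}\bigr)$. I expect this edge-disjointness/averaging step to be the main obstacle: it is exactly where the matching structure enters and where the factor $d$ is lost, and it is the reason the naive $1-1/e$ analysis fails.

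Finally I would solve the recurrence $g_i\ge g_{i-1}+\tfrac1k(F^*-d\,g_{i-1})$. Setting $e_i:=F^*/d-g_i$ gives $e_i\le(1-d/k)e_{i-1}$, hence $e_k\le(1-d/k)^k F^*/d\le e^{-1}F^*/d$, using $(1-d/k)^k\le e^{-d}\le e^{-1}$. Therefore $g_k\ge \tfrac1d(1-1/e)F^*$, and multiplying by $d-1$ yields $\widetilde{\val}(\chG_{greedy})\ge\tfrac1d(1-1/e)\,\widetilde{\val}(\chG^*)$, as claimed. The degenerate case $k\le d$ is immediate, since the recurrence already gives $g_1\ge F^*/k\ge F^*/d$ and $g_k\ge g_1$ by monotonicity.
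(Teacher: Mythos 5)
Your proof is correct, but it takes a genuinely different and more elementary route than the paper's. The paper's proof works through the hypergraph-matching translation of Lemma~\ref{lem:bijection}: it introduces the ordered and unordered matching-value functions $h$ and $f$, invokes the $1/d$ approximation guarantee of the classical greedy algorithm for weighted hypergraph matching (Lemma~\ref{lem:orderedmatch}, citing~\cite{chandra2001greedy}) to sandwich $\frac1d f(S)\le h(\vs)\le f(S)$, and then runs a submodularity-flavored telescoping induction (Lemmas~\ref{lem:matchingdiff} and~\ref{lem:submod2}) to extract the $(1-1/e)$ factor. You instead stay entirely inside the original computation graph $\cG$ and count edges: each greedy step deletes exactly $d$ edges per unit of gain and these deletions are disjoint across steps; the optimum's $F^*$ routings are pairwise edge-disjoint (this is where the uniqueness-of-path property and the single-layer assumption $\cover(v_j^*)=C_j^*$ enter, and it is the same structural fact that underlies the paper's matching bijection); hence each deleted edge kills at most one routing, the survivors certify available marginal gain, and averaging over the $k$ optimal sets gives the recurrence $g_i\ge g_{i-1}+\tfrac1k(F^*-d\,g_{i-1})$, which you solve correctly (including the degenerate case $k\le d$). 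The factor of $d$ is lost in both arguments for essentially the same combinatorial reason, but your version is self-contained (no black-box appeal to the hypergraph-matching approximation ratio), bounds the greedy value directly against $\widetilde{\val}(\chG^*)=(d-1)F^*$ rather than against the intermediate quantity $h(\vsk)$, and in fact yields the slightly stronger constant $\tfrac1d\bigl(1-e^{-d}\bigr)$ from $(1-d/k)^k\le e^{-d}$. The one point worth making explicit if you write this up is that the single-layer restriction is used twice: once to guarantee that greedy never removes $M$-to-$R$ edges (so $g_{i-1}$ really is the current $F$-value and the deleted-edge count is exactly $d\,g_{i-1}$), and once for the edge-disjointness of the optimal routings.
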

Unfortunately, we are not able to establish an approximation ratio for the function $\val(\cdot)$ itself, although we conjecture that a similar result holds.

The idea of the proof---which we give below in Section~\ref{sec:proof}---is as follows.
It is a standard result that greedy algorithms for submodular functions achieve a $1 - 1/e$ approximation ratio; this was the approach taken by \cite{HAG}.  Unfortunately, the \greedyone\ objective function is not technically submodular, since the order of the inputs matters, and this prevents the $1 - 1/e$ approximation result from being true.
However, we can use the connection to hypergraph matching developed in Lemma~\ref{lem:bijection} in order to translate the objective function of \greedyone \ to an objective function where the order does not matter, at the cost of a factor of $d$.  This results in a $\frac{1}{d}(1 - 1/e)$ approximation ratio for $\widetilde{\val}$.

\subsection{\greedytwo\ can strictly outperform \greedyone}
We first observe by example that \greedytwo\ also cannot achieve an approximation ratio better than $1/2$: the example is given in Figure~\ref{fig:notsubmod}.  Notice that this example also shows that the objective function that \greedytwo \ is greedily optimizing is not submodular.

However, we also show by example that there are graphs for which \greedytwo\ is strictly better than \greedyone.  Indeed, an example is shown in Figure~\ref{fig:g1andsep}.

Thus, the algorithm that runs both \greedyone\ and \greedytwo\ and takes the better of the two achieves at least the approximation guarantee of Theorem~\ref{thm:mainapprox}, and can sometimes do strictly better than \greedyone.

%\mkw{I defined $\cA_d$ and $\cB_{d,k}$ here with some motivation, since I've removed them from the earlier stuff.}

\subsection{Proof of Theorem~\ref{thm:mainapprox}}\label{sec:proof}
In this section we prove Theorem~\ref{thm:mainapprox}.
Since we are consider single-layer graphs, we can simplify the notation somewhat.
Let 
\[ \cA_d = \{ s \subset L : |s| = d \} \]
be the set of subsets of size $d$; we will associate each such subset with a possible intermediate node $m \in M$, so that $\inh(m) = s$.
Let 
\[ \cB_{d,k} = \{ \{s_1, s_2, \ldots, s_k\} : s_i \in \cA_d \ \forall i \} \]
be the collection of all ways to choose $k$ sets $s \in \cA_d$.  Thus, an element $S = \{s_1, s_2, \ldots, s_k\} \in \cB_{d,k}$ represents a set of possible solutions to the single-layer $d$-HAG problem, where the intermediate nodes are $m_1, \ldots, m_k$ so that $\inh(m_i) = s_i$.

\begin{remark}\label{rem:abuse}
With the above connection in mind, we will abuse notation and say that ``$\chP$ is the partial HAG computation graph induced by $S = \{s_1, \ldots, s_k\}$ and $\cG$,'' when we mean that $\chP$ is induced by a HAG computation graph $\chG$ that is equivalent to $\cG$ and whose intermediate nodes $m_1, \ldots, m_k$ have $\inh(m_i) = s_i$.  
\end{remark}

We first define the sequence of HAG graphs chosen by this algorithm.
%which differs slightly from the recomputing greedy d-HAG sequence corresponding to Algorithm~\ref{alg:greedy2} (Definition~\ref{def:recgreedhagseq}).

%\TODO{define $\cA'_{d,k},\cB'_{d,k}$ with cover sets all of size $d$ here?}
%\mkw{since we no longer have a recomputing greedy sequence, I just called this a greedy sequence instead of additive greedy.}
\begin{definition}\label{def:addgreedhagseq}
Let $\cG$ be a GNN computation graph.
Let $s_1, s_2, \ldots, s_k \in \cB_{d,k}$.
Define the \emph{greedy d-HAG sequence} of HAG computation graphs $\chG_1, \ldots, \chG_k$ to be the sequence of graphs that arise when we greedily assign edges between $M$ and $R$ while inserting the internal nodes corresponding to $s_1, \ldots, s_k$ in that order.  That is, we define
$\chG_0 = \cG$, and given $\chG_{i-1} = (L \cup M_{i-1} \cup R, \chE_{i-1})$, we recursively 
define $\chG_i$ as follows.  

Let $\chG'_i = (L \cup M_i \cup R, \chE_i')$, where $M_i = M_{i-1} \cup \{v_i\}$, and $\chG_i' = \chE_{i-1} \cup \{(u,v_i) \,|\, u \in s_i \}$.
Now let $\chP_i$ denote the partial HAG computation graph induced by $\chG_i'$ and $\cG$
(as per Definition~\ref{def:partialHAG}), and define
\[ \chG_i = \argmax_{ \substack{\chG \in \cS(\chP,\cG) \\ T_{\chG_{i-1}}(M_{i-1},R) \subseteq T_{\chG}(M_i,R)} } \val(\chG). \]
\end{definition}

\begin{remark}\label{rem:greedyseq}
Let $\chG_i = (L\cup M_i \cup R, \chE_i)$ be the $i^{th}$ graph in the greedy d-HAG sequence.
Then we obtain $\chG_i$ from $\chG_{i-1}$ by (a) 
adding an internal vertex $v_i$ with $\inh(v_i) = s_i$, and (b) for each $r \in R$, 
greedily adding the edge $(v_i, r)$ if we can; that is, if $\cover(v_i) \subseteq (\inh(r) \cap L)$.  (And if we do that, we remove any edges between $\cover(v_i)$ and $r$).
\end{remark}

%As with the recomputing greedy d-HAG sequence,  
%More than one possible greedy d-HAG sequence of HAG computation graphs may exist for the same sequence $s_1, s_2, \ldots, s_k$; without loss of generality we assume we mean the worst one, i.e. the one which results in the lowest $\val(\chG_j)$. %\mkw{Wait, really?  How can there be more than one?}

Before we proceed, we set some notation that will be helpful for the rest of the proof.
\begin{definition}\label{notation}
We will denote a length-$i$ ordered sequence $(s_1, \ldots, s_i) \in \cA_d^i$ by $\vs_i$.
Throughout, $S^* \in \cB_{d,k}$ will denote an element corresponding to an optimal solution $\chG^*$ to the single-layer $d$-HAG problem; that is, the intermediate nodes $M$ of an optimal solution $\chG^*$ define $S^*$ by
$S^* = \{ \inh(m) : m \in M \}$.  We will order the elements of $S^*$ arbitrarily as $(s_1^*, \ldots, s_k^*)$, and denote a prefix $(s_1^*, \ldots, s_i^*)$ by $\vs^*$.
We will use $(\vs, \vec{s'})$ to denote concatenation
e.g. $(\vs_i,\vec{s^*_i}) = (s_1,...,s_i,s^*_1,...,s_i^*)$. 
\end{definition}
%Let $w_1,...,w_{k}, w^*_1,...w^*_k$ be defined as $w_i = |s_i|-1$  and $w^*_i = |s^*_i|-1$ for $i =1,...,k$.
%\TODO{for single-layer only the weights aren't needed since they're all d-1, I left for completeness for now but they can be dropped if we aren't doing anything with multi-layer}

With this notation, we have the following definition.
\begin{definition}\label{def:ocost}
For some GNN computation graph $\cG=(\cV,\cE)$, 
%some $S=\{s_1,...,s_j\} \in \cB_{d,k}$,
%and an ordering $\vec{s_j}=(s_1,...,s_j)$, 
we define the functions $h: \cA_d^k \to \mathbb{Z}^+$ and $f:\cB_{k,d} \to \mathbb{Z}^+$ as follows.
%we define the following functions $g,h : \cA_{d}^k \to \mathbb{Z}^+$ and $f,f_w: \cB_{k,d} \to \mathbb{Z}^+$: %\TODO{permutations of elements of $\cB_{d,k}$, not just $\cB_{d,k}$}
%\begin{itemize}
%\item The \emph{ordered value} function $g$ is defined as $g(\vec{s_j}) = \val(\chG_j)$ where $\chG_1,...,\chG_k$ is the %additive 
%greedy d-HAG sequence (Definition~\ref{def:addgreedhagseq}) induced by $\vs_j$.
%\item 
The \emph{ordered matching value} function $h$ is defined as
\[ h(\{s_i\}_{i=1}^j) =(d-1) \sum_{i=1}^j  |\outh^{(j)}(m_i)|,\] 
where $\chG_1,...,\chG_k$ is the additive greedy d-HAG sequence, $\outh^{(j)}(m_i)$ is the out-neighborhood of $m_i$ in $\chG_j$, and $m_i$ is the vertex in $M$ in $\chG_j$ with $\inn(m_i) = s_i$. Now let $\outh^{(j)}$ be defined with respect to the graph 
$\chG = (L,M,R, \chE)$ that is the maximum value HAG computation graph in  $\cS(\chP, \cG)$, so that $\chP$ is the partial HAG computation graph induced by $S_j$ and $\chG$ (c.f. Remark~\ref{rem:abuse}), and let $M = \{m_1, \ldots, m_j\}$.  Then
%\begin{itemize}
%\item 
% The \emph{ maximum  value} function $f$ is defined as \[ f(S_j) =value(\chG).\] 
%where$\chG = (L,M,R, \chE)$,  is the maximum value HAG computation graph in  $\cS(\chP, \cG)$, so that $\chP$ is the partial HAG computation graph induced by $S_j$ and $\chG$ (c.f. Remark~\ref{ref:abuse}). 
%\item 
the \emph{ maximum matching value} function $f$ is defined as \[ f(S_j) = (d-1)\sum_{i=1}^j |\outh^{(j)}(m_i)|.\]

%Also note that $f_w(S_j) = \widetilde{\val}(\chG)$ for this $\chG$.\amp{added this note about the meaning of $f_w$ here, rather than repeating all the context for what $\chG$ right after the definition.}
%Above, $m_i \in M$ is the node so that $\inh(m_i) = s_i$.
%and where $\outh^{*}(m_i)$ 
% maximizes $\val(\chG_{\ell})$, and $\outh(m_i)$ is the out-neighborhood of $m_i \in M \subset V(\chG_{\ell})$ with  $cover(m_i) = s_i$.
%
%\end{itemize}
\end{definition}

The functions $h$ and $f$ are related by an additive term of $(d-1)k$ to the values of various graphs, as shown below in Lemma~\ref{lem:hf}.
We use them instead of these values, because as per Lemma~\ref{lem:bijection}, we will see that they correspond directly to the size of the matchings in a hypergraph.  

\begin{lemma}\label{lem:hf}
Let $\cG$ be a GNN computation graph.
For any $\vs_j \in \mathcal{A}_d^j$, let $\chG_j$ be the $j^{th}$ graph in the greedy d-HAG sequence defined by $\vs_j$ and $\cG$.  Let $\chG^*$ be the maximum-value element of $\mathcal{S}(\chP, \cG)$, 
where $\chP$ is the partial d-HAG graph induced by $S_j = \{s_1, \ldots, s_j\} \in \mathcal{B}_{d,j}$ (c.f. Remark~\ref{rem:abuse}).
Then
\[ \val(\chG_j) = h(\vs_j) - (d-1)j \]
and
\[ \val( \chG^* ) = f(S_j) - (d-1)j. \]
In particular, $h(\vs_j) = \widetilde{\val}(\chG_j)$ and $f(S_j) = \widetilde{\val}(\chG^*)$.
\end{lemma}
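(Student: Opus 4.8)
The plan is to prove both identities by direct substitution into the single-layer value formula, after which the two statements about $\widetilde{\val}$ follow by a trivial cancellation. The one structural fact to invoke is that we are in the single-layer (tripartite) regime, so no edges lie within $M$; consequently, for every intermediate node $v \in M$ we have $\cover(v) = \inh(v)$, and since each $s_i \in \mathcal{A}_d$ satisfies $|s_i| = d$, we get $|\cover(v)| = |\inh(v)| = d$. The excerpt has already recorded that in this case $\val(\chG) = \sum_{v \in M}(|\outh(v)|-1)(|\inh(v)|-1)$, and substituting $|\inh(v)| = d$ collapses this to $\val(\chG) = (d-1)\sum_{v \in M}(|\outh(v)|-1)$.

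First I would handle $\chG_j$. Its intermediate nodes are exactly $m_1, \ldots, m_j$ with $\inh(m_i) = s_i$, and the relevant out-neighborhoods are those taken in $\chG_j$, namely $\outh^{(j)}(m_i)$. Substituting into the collapsed formula,
\[
\val(\chG_j) = (d-1)\sum_{i=1}^j \left(|\outh^{(j)}(m_i)| - 1\right) = (d-1)\sum_{i=1}^j |\outh^{(j)}(m_i)| \;-\; (d-1)j .
\]
By Definition~\ref{def:ocost} the first term on the right is exactly $h(\vs_j)$, so $\val(\chG_j) = h(\vs_j) - (d-1)j$.

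Next I would repeat the identical computation for $\chG^*$, the maximum-value element of $\mathcal{S}(\chP,\cG)$ with $\chP$ induced by $S_j$. Its intermediate set is again $\{m_1,\ldots,m_j\}$, each with in-degree $d$, but now the out-neighborhoods $\outh^{(j)}$ are measured in $\chG^*$ as specified in Definition~\ref{def:ocost}. The same arithmetic gives $\val(\chG^*) = (d-1)\sum_{i=1}^j |\outh^{(j)}(m_i)| - (d-1)j = f(S_j) - (d-1)j$. Finally, since both $\chG_j$ and $\chG^*$ have exactly $k = j$ intermediate nodes, the definition $\widetilde{\val}(\chG) = \val(\chG) + k(d-1)$ adds back precisely $(d-1)j$, cancelling the subtracted term and yielding $h(\vs_j) = \widetilde{\val}(\chG_j)$ and $f(S_j) = \widetilde{\val}(\chG^*)$.

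There is no genuine obstacle here; the content is entirely bookkeeping. The only points requiring care are (i) invoking the single-layer assumption to force $|\cover(v)| = |\inh(v)| = d$, so that the quadratic value formula becomes linear in $|\outh(v)|$, and (ii) keeping straight that in $h$ the out-neighborhoods are taken in the greedy graph $\chG_j$ while in $f$ they are taken in the optimal extension $\chG^*$ — the two displays are structurally identical and differ only in which graph defines $\outh^{(j)}$. As a consistency check (not needed for the proof), $f(S_j)$ equals the weight of a maximum weighted matching in the hypergraph $H(\chP,\cG)$ of Definition~\ref{def:H}: each matched copy of $\cover(m_i)$ in $H_r$ has weight $|\cover(m_i)|-1 = d-1$ and occurs once for each $r \in \outh(m_i)$, so the total weight is $(d-1)\sum_i |\outh^{(j)}(m_i)| = f(S_j)$, matching Lemma~\ref{lem:bijection} with $c(\chP) = j(d-1)$.
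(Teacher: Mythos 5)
Your proof is correct and follows essentially the same route as the paper's: both specialize the single-layer value formula using $|\cover(m_i)|=|\inh(m_i)|=d$ to get $\val(\chG)=(d-1)\sum_i\bigl(|\outh(m_i)|-1\bigr)$, then read off $h$ and $f$ from the definition, with the $\widetilde{\val}$ identities following by cancellation. No substantive differences.
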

\begin{proof}
 For the first expression, let $\outh(m_i)$ denote the out-neighborhood of $m_i$ in $\chG_j$, where $m_i$ is the vertex in $M$ in $\chG_j$ with $\inn(m_i) = s_i$.
Then using the fact that $|\inh(m_i)| = d$ and $\cover(m_i)=\inh(m_i)$ for all $i$,
(recall that we are working in a single-layer $d$-HAG) we have
\begin{align*}
\val(\chG_j) &= 
\sum_{i=1}^j (|\inh(m_j)| - 1)(|\outh(m_j)| - 1) \\
&=
\sum_{i=1}^j\left[ |\outh(m_i)|\cdot d - \left(|\outh(m_i)|+d-1\right)\right] \\
&= \sum_{i=1}^j\left[ |\outh(m_i)|\cdot (d-1) -(d-1)\right] \\
&= (d-1)\sum_{i=1}^j\left[ |\outh(m_i)|-1\right] =(d-1)\sum_{i=1}^j\left[ |\outh(m_i)|\right]-(d-1)j\\
&= h(\vs_j) - (d-1)j.
\end{align*}
Similarly,
let $\outh^{*}$ be with respect to the graph $\chG^*$.
Then again using that $|\inh^{*}(m_i)| = d$ for all $i$, we have
\begin{align*} 
\val(\chG) &=  \sum_{i=1}^j\left[ |\outh^{*}(m_i)|\cdot d - \left(|\outh^{*}(m_i)|+d-1\right)\right]\\
&=(d-1)\sum_{i=1}^j\left[ |\outh^{*}(m_i)|\right]-(d-1)j  \\
&= f(\{s_1, \ldots, s_j\}) - (d-1)j. 
\end{align*}

\end{proof}

\begin{obs}\label{obs:monotone}
 The function $f$ is monotone.
\end{obs}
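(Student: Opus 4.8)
The plan is to translate the statement into the matching language of Lemma~\ref{lem:bijection}, where monotonicity becomes essentially trivial. First I would record, using Lemma~\ref{lem:hf} together with Lemma~\ref{lem:bijection}, that $f(S_j)$ is exactly the weight of a maximum-weight matching in the hypergraph $H(\chP,\cG)$, where $\chP$ is the partial $d$-HAG graph induced by $S_j$. Indeed, Lemma~\ref{lem:hf} gives $f(S_j) = \widetilde{\val}(\chG^*)$ with $\chG^*$ the maximum-value graph in $\cS(\chP,\cG)$, and Lemma~\ref{lem:bijection} gives $\val(\chG^*) = \val(\mathcal{N}^*) - c(\chP)$, where $\mathcal{N}^*$ is a maximum-weight matching in $H(\chP,\cG)$ and $c(\chP) = j(d-1)$. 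Since $\widetilde{\val}(\chG^*) = \val(\chG^*) + j(d-1)$, the two correction terms cancel and $f(S_j) = \val(\mathcal{N}^*)$.

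With this identity in hand, it suffices to show that enlarging $S_j$ by one set $s \in \cA_d$ (so that we pass to $S_{j+1} = S_j \cup \{s\}$) can only enlarge the maximum-weight matching. The key structural observation is that, in the single-layer case, passing from $\chP$ (induced by $S_j$) to $\chP'$ (induced by $S_{j+1}$) only adds edges to the associated hypergraph and never removes or alters existing ones. Concretely, the new intermediate node has cover equal to $s \subseteq L$ with $|s| = d$, so by Definition~\ref{def:H} the only change to $H$ is that, for every $r \in R$ with $s \subseteq \inn(r)$, the edge $s$ (of weight $d-1$) is added to the copy $H_r$; all previously present edges, and their weights, are unchanged. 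Hence $H(\chP,\cG)$ is a subhypergraph of $H(\chP',\cG)$ on the same vertex set.

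Because the edge set only grows, every matching of $H(\chP,\cG)$ is also a matching of $H(\chP',\cG)$, so the maximum attainable matching weight cannot decrease, giving $f(S_j) \le f(S_{j+1})$. Iterating this single-step inequality along any chain from $S$ to a superset $T$ yields $f(S) \le f(T)$ whenever $S \subseteq T$, which is the desired monotonicity. I do not expect a genuine obstacle here; the only points requiring a little care are (i) verifying the cancellation of the $j(d-1)$ terms, so that $f$ really is a maximum matching weight rather than merely an affine shift of one, and (ii) confirming that adding an intermediate node is purely additive at the level of the hypergraph's edges---both of which follow directly from the definitions and from Lemma~\ref{lem:bijection}. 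Note that non-negativity of the new edge weight ($d-1 \ge 0$) is not even needed, since we only use that the old optimal matching remains feasible.
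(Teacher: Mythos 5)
Your proof is correct, but it is routed differently from the paper's. The paper's proof is a two-line argument entirely at the level of HAG graphs: by Lemma~\ref{lem:hf}, $f(S_j)$ is (an affine shift of) $\max_{\chG \in \cS(\chP,\cG)} \val(\chG)$, and the paper asserts that enlarging $S_j$ only enlarges the feasible set $\cS(\chP,\cG)$, so the maximum cannot decrease. You instead pass through Lemma~\ref{lem:bijection} to identify $f(S_j)$ with the weight of a maximum-weight matching in $H(\chP,\cG)$ (checking that the $j(d-1)$ terms cancel, which the paper itself records as equation~\eqref{eq:Noptsize} in the proof of Lemma~\ref{lem:orderedmatch}), and then observe that adding a set $s$ to $S_j$ only adds hyperedges to each $H_r$, so the old optimal matching remains feasible. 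Both arguments are "a maximum over a larger feasible set is no smaller," but yours is arguably the more airtight rendering: the paper's claim that $\cS(\chP,\cG)$ "grows larger" is not literally a set containment, since graphs induced by $S_j$ lack the new intermediate node $v_{j+1}$ and so are not elements of $\cS(\chP',\cG)$; one must embed them by adding $v_{j+1}$ with empty out-neighborhood and note that this preserves $\widetilde{\val}$ (while decreasing $\val$ by $d-1$). Your hypergraph formulation sidesteps this entirely, because the old maximum matching genuinely is a matching of the new hypergraph. The cost of your route is that it leans on the heavier machinery of Lemma~\ref{lem:bijection}, which the paper reserves for Lemma~\ref{lem:orderedmatch}.
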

\begin{proof}
By Lemma~\ref{lem:hf} it suffices to show that 
\[ \max_{\chG \in \mathcal{S}(\chP, \cG) } \val(\chG) \]
does not decrease when $\chP$ goes from being the partial d-HAG graph induced by $\{s_1, \ldots, s_j\}$ to the partial d-HAG graph induced by $\{s_1, \ldots, s_j, s_{j+1}\}$.  This is true because the set $\mathcal{S}(\chP, \cG)$ only grows larger with this change, and so the maximum is being taken over a larger set.
\end{proof}

\begin{lemma}\label{lem:orderedmatch}
Let $\cG$ be a GNN graph.
Let $S_t \in \cB_{d,k}$.
Then for any ordering $\vs_t= s_1,...,s_t$ of $S_t$:
 \[\frac{1}{d}\cdot f(S_t) \leq h(\vec{s_t}) \leq f(S_t)\] 
  \end{lemma}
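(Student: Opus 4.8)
The plan is to translate both $h$ and $f$ into the language of hypergraph matchings via Lemma~\ref{lem:bijection}, and then invoke the classical fact that a maximal matching in a $d$-uniform hypergraph is a $\frac{1}{d}$-approximation of the maximum matching. The upper bound will then be immediate, and the lower bound will follow from the standard counting argument applied coordinate-by-coordinate over $r \in R$.

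First I would fix the partial $d$-HAG graph $\chP$ induced by $S_t$ and $\cG$, and form the hypergraph $H = \bigsqcup_{r \in R} H_r$ of Definition~\ref{def:H}. Since we are in the single-layer case, $\cover(v) = \inh(v)$ has size exactly $d$ for every $v \in M$, so each $H_r$ is $d$-uniform and every edge has weight $d-1$; hence the weight of any matching in $H_r$ is exactly $(d-1)$ times its cardinality. By Lemma~\ref{lem:hf} and Lemma~\ref{lem:bijection}, $f(S_t) = \widetilde{\val}(\chG^*)$ equals the value of the maximum-weight matching of $H$, which (as $H$ is a disjoint union and all weights are equal) is $(d-1)\sum_r |\mathcal{N}^{\max}_r|$, where $\mathcal{N}^{\max}_r$ is a maximum-cardinality matching of $H_r$.

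Next I would identify the matching corresponding to the greedy sequence. Using Remark~\ref{rem:greedyseq}, when $v_i$ is inserted we add the edge $(v_i,r)$ exactly when $\cover(v_i) \subseteq \inh(r) \cap L$; because adding an earlier edge $(v_j,r)$ deleted the $L$-neighbors in $\cover(v_j)$, this condition holds iff $\cover(v_i) \subseteq \inn(r)$ and $\cover(v_i)$ is disjoint from every $\cover(v_j)$ already selected for $r$. Thus processing $s_1,\ldots,s_t$ in order builds, within each $H_r$, precisely the greedy matching $\mathcal{N}^g_r$ obtained by scanning the edges of $H_r$ in that order and keeping an edge whenever it is disjoint from those kept so far. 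This $\mathcal{N}^g_r$ is maximal, and by the same bijection $h(\vs_t) = \widetilde{\val}(\chG_t) = (d-1)\sum_r |\mathcal{N}^g_r|$.

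With this reduction the two inequalities follow per coordinate $r$ and then sum. The upper bound $h(\vs_t) \le f(S_t)$ is immediate since $|\mathcal{N}^g_r| \le |\mathcal{N}^{\max}_r|$. For the lower bound I would invoke the standard $d$-uniform argument: every edge of $\mathcal{N}^{\max}_r$ must meet some edge of the maximal matching $\mathcal{N}^g_r$ (else it could be added, contradicting maximality), and since the edges of $\mathcal{N}^{\max}_r$ are pairwise disjoint while each edge of $\mathcal{N}^g_r$ has only $d$ vertices, each edge of $\mathcal{N}^g_r$ is met by at most $d$ edges of $\mathcal{N}^{\max}_r$; hence $|\mathcal{N}^{\max}_r| \le d\,|\mathcal{N}^g_r|$. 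Multiplying by $(d-1)$ and summing over $r$ yields $\frac{1}{d} f(S_t) \le h(\vs_t)$. I expect the main obstacle to be the bookkeeping of the third paragraph: one must verify carefully that the greedy edge-deletion rule of the $d$-HAG sequence coincides exactly with the disjointness test of greedy matching and that the resulting matching is maximal, since both inequalities rest entirely on this identification.
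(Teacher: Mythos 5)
Your proposal is correct and follows essentially the same route as the paper's proof: translate $h$ and $f$ into matchings of the disjoint-union hypergraph $H$ via Lemma~\ref{lem:bijection} and Lemma~\ref{lem:hf}, identify the greedy $d$-HAG sequence with the greedy (maximal) matching built edge-by-edge in each $H_r$, and apply the $1/d$ guarantee for greedy hypergraph matching. The only difference is cosmetic: the paper cites the known $1/d$ approximation ratio for greedy matching in $d$-uniform hypergraphs, whereas you reprove it inline with the standard maximality counting argument.
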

  \begin{proof}
Let $\chP$ be the partial d-HAG graph induced by $S_t$ and $\cG$ (c.f. Remark~\ref{rem:abuse}), and let $\mathcal{S} = \mathcal{S}(\chP, \cG)$.  Let $\chG_1, \ldots, \chG_t$ be the greedy d-HAG sequence defined by $\vs_t$ and $\cG$.  
Let $H^{(i)}$ be the hypergraph associated with $\chG_i = (L \cup M_i \cup R, \chE_i)$ as in Definition~\ref{def:H}.
Consider the bijection $\varphi$ from (the proof of) Lemma~\ref{lem:bijection}, and let $\mathcal{N}^{(i)}$ be a matching in  $H^{(i)}$, so that
 $\varphi(\mathcal{N}^{(i)}) = \chG_i.$
Recall that the matching $\mathcal{N}^{(i)}$ can be decomposed into matchings $\mathcal{N}^{(i)}_r$, each on the graph $H_r$ from Definition~\ref{def:H}.
In more detail, the proof of Lemma~\ref{lem:bijection} shows that the hyperedge $(s_j \cap \inn(r))$ is in $\mathcal{N}_r^{(i)}$ if and only if the edge $(m_j, r)$ is in $\chG_i$.

First, we observe by Lemma~\ref{lem:bijection} and Lemma~\ref{lem:hf} that for any $i \leq t$ and for any $\vs_i \in \mathcal{A}_d^i$,
\begin{equation}\label{eq:Nsize}
 h(\vs_i) = (d-1) \cdot \sum_r |\mathcal{N}_r^{(i)}| = \val(\mathcal{N}^{(i)}),
\end{equation}
where the value on the right hand side represents the (weighted) value of the matching.  (Notice that since we are looking at the single-layer d-HAG problem, all weights are equal to $d-1$).

Similarly, let $\mathcal{N}^*$ be such that $\varphi(\mathcal{N}^*) = \chG^*$, where $\chG^*$ is the maximum-value element of $\mathcal{S}(\chP, \cG)$ where $\chP$ is induced by $S_t$.
Lemma~\ref{lem:bijection} implies that $\mathcal{N}^*$ is a maximum hypergraph matching for $H^{(t)}$.
As above, by the definition of $H$, $\mathcal{N}^*$ decomposes into matchings $\mathcal{N}_r^*$ of $H^{(t)}_r$ for each $r \in R$.
Then for $S_t \in \mathcal{B}_{d,t}$, Lemma~\ref{lem:bijection} and Lemma~\ref{lem:hf} imply that
\begin{equation}\label{eq:Noptsize}
f(S_t) = (d-1) \cdot \sum_r |\mathcal{N}^*_r| = \val(\mathcal{N}^*).
\end{equation}

Now consider the change from $\mathcal{N}_r^{(i)}$ to $\mathcal{N}_r^{(i+1)}$.  When we pass from $H^{(i)}$ to $H^{(i+1)}$, we add a hyperedge $e_r := s_i \cap \inn(r)$ to each graph $H_r^{(i)}$. The hyperedge $e_r$ is added to the matching $\mathcal{N}_r^{(i+1)}$ if and only if it can be: that is, if and only if it does not intersect $s_j \cap \inn(r)$ for some $j < i$.  This is because of the definition of the correspondence $\varphi$, and also the observation in Remark~\ref{rem:greedyseq} about how $\chG_{i+1}$ is created from $\chG_i$.  

Therefore, for any $r \in R$, the matching $\mathcal{N}_r^{(t)}$ can be found by the following algorithm:
\begin{itemize}
\item Let $H_r^{(t)}$ be as above.
\item $\mathcal{N}^{(0)}_r = \emptyset$
\item For $i=1,\ldots,t$:
\begin{itemize}
	\item If the hyperedge $s_i \cap \inn(r)$ can be added to $\mathcal{N}^{(0)}_r$ and still form a hypergraph matching of $H_r^{(t)}$, then let $\mathcal{N}^{(i)}_r = \mathcal{N}^{(i)}_r \cup \{ s_i \cap \inn(r) \}$.
\end{itemize}
\end{itemize}
We observe that this is the classical greedy algorithm for maximum hypergraph matching.  This algorithm is well-known to achieve an approximation ratio of $1/d$~\cite{chandra2001greedy}.  That is,
\[ \frac{1}{d} \val(\mathcal{N}^*) \leq \val(\mathcal{N}^{(t)}) \leq \val(\mathcal{N}^*). \]
By \eqref{eq:Nsize} and \eqref{eq:Noptsize}, this implies that
\[ \frac{1}{d} f(S_t) \leq h(\vs_t) \leq f(S_t), \]
as desired.

 \end{proof}

\begin{lemma}\label{lem:matchingdiff}
Let $S^*$ be as in Definition~\ref{notation}.
Let $\vsk = (s^*_1,...,s^*_k)$ be any order of elements of $S^*$. 
Let $\vs_i = (s_1, \ldots, s_i)$ be the nodes added after $i$ steps of \greedyone.  
Then
$$h((\vs_i,\vsk)) -h(\vsk) \geq - \frac{d-1}{d}h(\vsk).$$
\end{lemma}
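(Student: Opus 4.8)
The plan is to rewrite the target inequality in the cleaner equivalent form
\[ h((\vs_i,\vsk)) \;\ge\; \tfrac{1}{d}\,h(\vsk), \]
which follows from the stated bound after rearranging, since $h(\vsk) - \tfrac{d-1}{d}h(\vsk) = \tfrac{1}{d}h(\vsk)$. I then aim to establish this form by chaining together three facts already available in the excerpt, rather than re-running any matching argument from scratch.

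First I would apply Lemma~\ref{lem:orderedmatch} to the combined collection $S := S_i \cup S^*$ consisting of the $i+k$ subsets $s_1,\dots,s_i,s^*_1,\dots,s^*_k$, viewed as an element of $\cB_{d,\,i+k}$, using the particular ordering $(\vs_i,\vsk)$. The lower bound of that lemma gives $h((\vs_i,\vsk)) \ge \tfrac{1}{d} f(S)$. Next, since $S \supseteq S^*$, the monotonicity of $f$ (Observation~\ref{obs:monotone}) yields $f(S) \ge f(S^*)$. Finally, applying the upper bound of Lemma~\ref{lem:orderedmatch} to $S^*$ with the ordering $\vsk$ gives $h(\vsk) \le f(S^*)$. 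Chaining these,
\[ h((\vs_i,\vsk)) \;\ge\; \tfrac{1}{d} f(S) \;\ge\; \tfrac{1}{d} f(S^*) \;\ge\; \tfrac{1}{d} h(\vsk), \]
which is exactly the reduced claim.

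Conceptually, the heart of the argument is already packaged inside Lemma~\ref{lem:orderedmatch}: by Lemma~\ref{lem:hf} and \eqref{eq:Nsize}, $h(\cdot)$ evaluated on an ordered sequence is the weighted value of the greedy hypergraph matching obtained by inserting the corresponding hyperedges in that order, and a greedy (hence maximal) matching is within a factor $\tfrac{1}{d}$ of the maximum matching value $f$. If a self-contained argument were preferred, I would instead observe that for each $r \in R$ the greedy matching induced by the order $(\vs_i,\vsk)$ is maximal, so every hyperedge $s^*_j \cap \inn(r)$ meets one of its edges; since each greedy edge has size $d$ and the $\vsk$-matching at $r$ is vertex-disjoint, a counting bound shows that matching is at most $d$ times the greedy one, and summing over $r$ with the common weight $d-1$ recovers $h((\vs_i,\vsk)) \ge \tfrac{1}{d}h(\vsk)$ directly.

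The step I expect to require the most care is the bookkeeping needed to apply Lemma~\ref{lem:orderedmatch} to the concatenated collection: I must check that $(\vs_i,\vsk)$ is a legitimate ordering of $S_i \cup S^*$, so that $h((\vs_i,\vsk))$ is precisely the quantity the lemma bounds, and that possible coincidences $s_j = s^*_{j'}$ (duplicate $d$-subsets) do not disturb the matching interpretation. They do not, since a repeated hyperedge simply fails the greedy disjointness test and is harmless to both $h$ and $f$. I would also flag that the hypothesis ``$\vs_i$ are the nodes added by \greedyone'' is not actually used here: the inequality holds for an arbitrary prefix $\vs_i$, which is convenient because later applications instantiate $\vs_i$ with the greedy choices.
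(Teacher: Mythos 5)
Your proposal is correct and is essentially identical to the paper's own proof: the paper establishes the same chain $h((\vs_i,\vsk)) \geq \tfrac{1}{d}f(S_i\cup S^*) \geq \tfrac{1}{d}f(S^*) \geq \tfrac{1}{d}h(\vsk)$ using Lemma~\ref{lem:orderedmatch} and Observation~\ref{obs:monotone} in exactly the way you describe. Your added remarks on duplicate $d$-subsets and on the hypothesis about \greedyone\ being unnecessary are accurate but do not change the argument.
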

\begin{proof}
Letting $S_i$ denote the set of elements of $\vs_i$, we have
\[h((\vs_i,\vsk)) -h(\vsk) \geq \frac{1}{d}f(S_i\cup S^*) - h(\vsk) \geq \frac{1}{d}f(S^*)  -h(\vsk) \geq \frac{1}{d}h(\vsk)-h(\vsk) = -\frac{d-1}{d}h(\vsk)\]
The first inequality is  an application of Lemma~\ref{lem:orderedmatch}. The second inequality follows from $f$ being monotone (Observation~\ref{obs:monotone}). The third inequality is because $f(S^*)$ gives the optimal graph choice given $S^*$, while $h(\vsk)$ gives one option of graph choice given $S^*$.
\end{proof}

\begin{lemma}\label{lem:submod2}
Let $S^*$ be as in Definition~\ref{notation}.
Let $\vsk = (s^*_1,...,s^*_k)$ be any order of elements of $S^*$. 
Let $\vs_i = (s_1, \ldots, s_i)$ be the nodes added after $i$ steps of \greedyone.  
Then, we have
\[ h((\vs_i,\vsk)) - h(\vs_i) \leq \left(1-\frac{1}{k+1}\right) (h((\vs_i,\vsk))-h(\vs_{i-1})). \]
\end{lemma}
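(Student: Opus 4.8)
The plan is to first reduce the stated inequality to the cleaner claim
\[ h((\vs_i,\vsk)) - h(\vs_i) \le k\,\bigl(h(\vs_i) - h(\vs_{i-1})\bigr). \]
Write $a_j = h(\vs_j)$ and $b = h((\vs_i,\vsk))$, and set $\Delta_i = a_i - a_{i-1}$; this is nonnegative because extending the greedy d-HAG sequence by one node only enlarges the associated matching, so $h$ does not decrease. Then $b - a_{i-1} = (b-a_i) + \Delta_i$, and the claimed bound $b - a_i \le \bigl(1 - \tfrac{1}{k+1}\bigr)(b - a_{i-1})$, after multiplying through by $k+1$ and cancelling the $k(b-a_i)$ terms, rearranges exactly to $b - a_i \le k\Delta_i$. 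So it suffices to establish this last inequality, and all quantities involved are nonnegative so there are no sign issues.

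For the main claim I would work entirely on the matching side, using Lemma~\ref{lem:bijection} and \eqref{eq:Nsize}. Telescoping over the suffix $\vsk = (s^*_1,\dots,s^*_k)$ gives
\[ h((\vs_i,\vsk)) - h(\vs_i) = \sum_{j=1}^{k} \gamma_j, \qquad \gamma_j := h\bigl((\vs_i,s^*_1,\dots,s^*_j)\bigr) - h\bigl((\vs_i,s^*_1,\dots,s^*_{j-1})\bigr), \]
where each $\gamma_j$ is the marginal increase in matching value from greedily inserting the intermediate node with cover $s^*_j$ on top of the current state. By \eqref{eq:Nsize}, for a fixed candidate set $s$ this marginal gain equals $(d-1)$ times the number of $r \in R$ for which the hyperedge $s$ can still be added to $\mathcal{N}_r$, i.e.\ for which $s \subseteq \inh(r)\cap L$ and $s$ is disjoint from the hyperedges already matched at $r$. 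The plan is then to bound each $\gamma_j$ by $\Delta_i$ individually, so that the sum is at most $k\Delta_i$.

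To bound a single $\gamma_j$ I would combine two facts. First, a monotonicity observation: since the greedy d-HAG sequence only ever adds $M$--$R$ edges and deletes $L$--$R$ edges (Remark~\ref{rem:greedyseq}), the matching $\mathcal{N}_r$ obtained after processing $(\vs_i, s^*_1,\dots,s^*_{j-1})$ contains the matching $\mathcal{N}_r$ obtained after processing only $\vs_{i-1}$; moreover, the marginal gain of inserting a \emph{fixed} set $s$ can only decrease as the ambient matching grows, because any $s$ compatible with a larger matching is also compatible with a smaller one. Hence $\gamma_j$ is at most the gain of inserting $s^*_j$ directly on top of $\vs_{i-1}$. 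Second, greedy optimality: at step $i$ of \greedyone\ the set $s_i$ is chosen precisely to maximize $|R_{s_i}| = |\bigcap_{v\in s_i}\outh(v)\cap R|$, which is exactly this marginal gain (up to the factor $d-1$) maximized over all candidate sets of size $d$; in the single-layer case $s^*_j \subseteq L$ is one of the candidates considered, so the gain of inserting $s^*_j$ on top of $\vs_{i-1}$ is at most $\Delta_i = h(\vs_i) - h(\vs_{i-1})$. Chaining these gives $\gamma_j \le \Delta_i$ for every $j$, and summing yields $b - a_i \le k\Delta_i$.

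The step I expect to be the main obstacle is the monotonicity of the marginal gain, since it requires pinning down that growing the matching never creates a new opportunity to match a fixed hyperedge; concretely, one must check that the $L$--$R$ edge deletions performed by the greedy sequence correspond exactly to conflicts in $\mathcal{N}_r$, so that ``addable'' is an antitone property of the current matching. Once that is in place, the greedy-optimality comparison follows directly from the definition of \greedyone, and the telescoping together with the algebraic reduction of the first paragraph are routine.
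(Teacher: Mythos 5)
Your proof is correct, and it reaches the lemma by a mildly different (and slightly cleaner) route than the paper's. Both arguments rest on the same two ingredients: telescoping $h((\vs_i,\vsk))-h(\vs_i)$ over the suffix $\vsk$, and combining antitonicity of the marginal gain $\Delta(\vs,s):=h((\vs,s))-h(\vs)$ with the greedy optimality of $s_i$. The paper first derives $\Delta(\vs_i,s_{i+1})\geq\tfrac{1}{k}\left(h((\vs_i,\vsk))-h(\vs_i)\right)$, then a second inequality $h((\vs_{i-1},\vsk))\geq h((\vs_i,\vsk))-\Delta(\vs_{i-1},s_i)$, and combines the two with an index shift to obtain $\Delta(\vs_{i-1},s_i)\geq\tfrac{1}{k+1}\left(h((\vs_i,\vsk))-h(\vs_{i-1})\right)$, from which the lemma follows by the same rearrangement you perform. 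You instead observe at the outset that the lemma is algebraically equivalent to $h((\vs_i,\vsk))-h(\vs_i)\leq k\,\Delta(\vs_{i-1},s_i)$ and prove that in one pass, bounding each telescoped increment $\Delta((\vs_i,s^*_1,\ldots,s^*_{j-1}),s^*_j)$ by $\Delta(\vs_{i-1},s^*_j)\leq\Delta(\vs_{i-1},s_i)$. This buys you something concrete: every antitonicity comparison you make is between a sequence and one of its \emph{prefixes}, where the greedy matchings $\mathcal{N}_r$ are literally nested and the decrease of the marginal gain is immediate --- exactly the point you flag as the potential obstacle, and correctly resolve. The paper's derivation of its second inequality additionally invokes $\Delta((\vs_i,\vs^*_{j-1}),s^*_j)\leq\Delta((\vs_{i-1},\vs^*_{j-1}),s^*_j)$, where the shorter sequence is obtained by deleting a middle element rather than truncating, so the corresponding greedy states need not be nested and that step requires separate justification; your version sidesteps it entirely. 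Your identification of $\Delta(\vs_{i-1},C)$ with $(d-1)\,|\bigcap_{v\in C}\outh(v)\cap R|$, the quantity \greedyone\ maximizes at step $i$, is also correct in the single-layer setting, so the greedy-optimality comparison goes through as you describe.
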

\begin{proof}
For any $\vs$ and $s'_{\ell}$, let  $\Delta(\vs,s'_{\ell}) = h((\vs,s'_{\ell}))-h(\vs)$.
That is, $\Delta$ is the marginal benefit of adding the intermediate node $s'_\ell$ on top of the nodes $\vs$, assuming that we are greedily attaching all of the edges that we can.

For $i \leq k$, we have
\begin{align*}
h((\vs_i,\vsk)) - h(\vs_i) &= \sum_{j=1}^k \left[ h((\vs_i,\vs^*_j)) - h((\vs_i,\vs_{j-1}^*)) \right] \\
& = \sum_{j=1}^k \Delta((\vs_i,\vs^*_{j-1}),s^*_j) \\
&\leq \sum_{j=1}^k \Delta( \vs_i, s^*_j ),
\end{align*}
where in the last line we have used the fact that the marginal benefit of adding $s^*_j$ later is less than adding it earlier.  (In this sense, $h$ behaves like a submodular function, except that the order of the inputs to $h$ matters; crucially, the function $f$, which is defined on sets rather than sequences, is \em not \em submodular.)
By the definition of \greedyone, we have $\Delta(\vs_i, s_j^*) \leq \Delta(\vs_i, s_{i+1})$ for all $j$, and with the above this implies that
\[ h((\vs_i,\vsk)) - h(\vs_i) \leq \sum_{j=1}^k \Delta(\vs_i, s_{i+1}) = k \cdot \Delta(\vs_i, s_{i+1}). \]

Rearranging this, we have
\begin{equation}\label{eq:incbig}
\Delta(\vs_i, s_{i+1}) \geq \frac{1}{k} \left( h((\vs_i, \vs_k^*)) - h(\vs_i) \right)
\end{equation}
for any $i \leq k$.

Furthermore,
\begin{align}
h((\vs_i,\vsk)) &= h(\vs_{i-1})+\Delta(\vs_{i-1},s_i)+\sum_{j=1}^k\Delta((\vs_i,\vs^*_{j-1}),s^*_j) \notag\\
&\leq h(\vs_{i-1})+\Delta(\vs_{i-1},s_i)+\sum_{j=1}^k\Delta((\vs_{i-1},\vs^*_{j-1}),s^*_j) \label{eq:int}
\end{align}
where in the second line we have used the fact that 
\[ \Delta((\vs_i,\vs^*_{j-1}),s^*_j) \leq \Delta((\vs_{i-1},\vs^*_{j-1}),s^*_j) \]
for any $j$.
Thus, we have
\[ h((\vs_i,\vsk)) \leq h((\vs_{i-1},\vsk)) +\Delta(\vs_{i-1},s_i)\]
using the fact the the right hand side above is equal to the second line of \eqref{eq:int}.
Rearranging, this establishes
\begin{equation}\label{eq:hbig}
 h((\vs_{i-1},\vsk)) \geq  h((\vs_i,\vsk))-\Delta(\vs_{i-1},s_i) 
\end{equation}
Plugging \eqref{eq:hbig} into \eqref{eq:incbig}, we obtain
\begin{align*}
\Delta(\vs_{i-1},s_i) & \geq \frac{1}{k}(h((\vs_i,\vsk))-\Delta(\vs_{i-1},s_i)-h(\vs_{i-1})) 
\end{align*}
and rearranging this implies that
\begin{equation}\label{eq:5}
 \Delta(\vs_{i-1},s_i)  \geq \frac{1}{k+1}(h((\vs_i,\vsk))-h(\vs_{i-1})).
\end{equation}

Now we have
\begin{align*}
 h((\vs_i,\vsk)) - h(\vs_i) &= h((\vs_i,\vsk))-h(\vs_{i-1})-\Delta(\vs_{i-1},s_i) \\
&\leq  h((\vs_i,\vsk)) - h(\vs_{i-1})-\frac{1}{k+1}(h((\vs_i,\vsk))-h(\vs_{i-1})) \\
& = \left(1-\frac{1}{k+1}\right)( h((\vs_i,\vsk)) - h(\vs_{i-1}))
\end{align*}
\end{proof}
where we have used \eqref{eq:5} in the second line.

Finally, we can prove Theorem~\ref{thm:mainapprox}.
\begin{proof}[Proof of Theorem~\ref{thm:mainapprox}]
From Lemma~\ref{lem:submod2}, we have
\[h((\vs_i,\vsk)) - h(\vs_i) \leq \left(1-\frac{1}{k+1}\right) (h((\vs_i,\vsk))-h(\vs_{i-1}))\]
so
\begin{align*}
\left[ h(\vsk)- h(\vs_i)\right] + \left[h((\vs_i,\vsk))-h(\vsk) \right ]  &\leq \left(1-\frac{1}{k+1}\right) \left[h((\vs_i,\vsk))-h(\vs_{i-1})\right] \\
&= \left(1-\frac{1}{k+1}\right) \left[ h(\vsk)-h(\vs_{i-1})\right]+\left(1-\frac{1}{k+1}\right) \left[h((\vs_i,\vsk))-h(\vsk)\right].
\end{align*}
Rearranging, this implies that
\[h(\vsk) -h(\vs_i) \leq  \left(1-\frac{1}{k+1}\right) \left[ h(\vsk)-h(\vs_{i-1})\right] - \frac{h((\vs_i,\vsk))-h(\vsk)}{k+1}\]
Using Lemma~\ref{lem:matchingdiff}, we see that, for all $i$,
\begin{equation}
h(\vsk) -h(\vs_i) \leq \left(1-\frac{1}{k+1}\right)\left[h(\vsk)-h(\vec{s_{i-1}})\right]+\frac{h(\vsk)}{k+1}\cdot \frac{d-1}{d} \label{eq:6}
\end{equation}

Now suppose by induction that
\[ 
h(\vsk) - h(\vs_{i-1}) \leq 
\left( 1 + \frac{1}{d}\left( \left( 1 - \frac{1}{k+1}\right)^{i-1} - 1\right) \right) h(\vsk) \]
The base case for $i=1$ clearly holds. 
Plugging this inductive hypothesis into \eqref{eq:6},
\begin{align*}
h(\vsk) -h(\vs_i) &\leq \left(1-\frac{1}{k+1}\right)\left[h(\vsk)-h(\vec{s_{i-1}})\right]+\frac{h(\vsk)}{k+1}\cdot \frac{d-1}{d} \\
&\leq \left(1-\frac{1}{k+1}\right)
\left( 1 + \frac{1}{d}\left( \left( 1 - \frac{1}{k+1}\right)^{i-1} - 1\right) \right) h(\vsk)+\frac{h(\vsk)}{k+1}\cdot \frac{d-1}{d} \\
&= \left( 1 + \frac{1}{d}\left( \left( 1 - \frac{1}{k+1}\right)^{i} - 1\right) \right) h(\vsk),
\end{align*}
which establishes the inductive hypothesis for $i$.
By induction, we conclude that
\begin{align*}
h(\vsk) - h(\vs_{k}) &\leq 
\left( 1 + \frac{1}{d}\left( \left( 1 - \frac{1}{k+1}\right)^{k} - 1\right) \right) h(\vsk) \\
&\leq \left( 1 + \frac{1}{d} \left( \frac{1}{e} - 1 \right) \right) h(\vsk).
\end{align*}
Rearranging, we have
\[ h(\vs_k) \geq \frac{1}{d}\left( 1 - \frac{1}{e} \right)h(\vsk) ,\]
as desired.
\end{proof}

\section{Experimental Results}\label{sec:experiments}

We first show that multi-layer HAG graphs do not have a significantly higher value for small  $k$ compared to single-layer HAG graphs; this justifies our focus on single-layer HAG graphs in Theorem~\ref{thm:mainapprox}.  We compared \greedyone\ single-layer and multi-layer results for three datasets: a Facebook dataset~\cite{mcauley2012learning}, an Amazon co-purchases dataset~\cite{leskovec2007dynamics} (the subset from March 2nd, 2003), and the Email-EU dataset~\cite{leskovec2007graph}\footnote{All three of these datasets can be found at \UrlFont{snap.stanford.edu/data}}. On average over $k=1,...,100$, the multi-layer results increased  the value compared to the single-layer solution by $3.2\%$, $0.22\%$, and $4.9\%$, respectively (see Table~\ref{tab:singlevsmulti}).

\begin{table}[]
\centering
\begin{tabular}{|l||l|l|l|}
\hline
Dataset                                              & Facebook & Amazon   & Email-EU \\ \hline
Mean value for single-layer HAG                      & 8636.09  & 1800.73  & 3088.73  \\ \hline
Mean value for multi-layer HAG                       & 8945.83  & 1806.29  & 3260.11  \\ \hline
Mean \% improvement for multi-layer HAG              & 3.2\%    & 0.22\%   & 4.9\%    \\ \hline
Std. dev. of \% improvement for multi-layer HAG & 1.02782  & 0.216026 & 1.674153 \\ \hline
\end{tabular}\caption{The improvement of multi-layer over single-layer for \greedyone\ on real-world datasets averaged over $k=1,...,100$.}\label{tab:singlevsmulti}
\end{table}

We next show how well single-layer \greedyone\ and \greedytwo\ perform compared to the optimal single-layer solution (computing the optimum is only tractable for limited graph parameters even in the single-layer case, so we did not implement it for multi-layer HAGs). Figure~\ref{fig:accuracygraphs} shows the quantity $1 - \alpha$, where $\alpha$ is the approximation ratio $\val(\chG_{greedy})/\val(\chG_{opt})$, where $\chG_{greedy}$ is the solution returned by for \greedyone\ and \greedytwo, and $\chG_{opt}$ is the optimal solution, for Erd\H{o}s-R\'enyi graphs $G(n,p)$ with $n=15$ and various values of $p$. Higher values of $p$ result in approximation ratios slightly further from $1$ for both $k=2$ and $k=3$, although in all experiments the approximation ratios are quite close to $1$ for both algorithms. 

\begin{figure}
\centering
 \subfloat[\label{fig:comparek2n}]{\includegraphics[width=.3\textwidth]{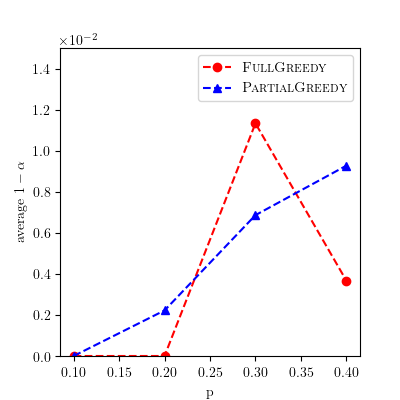}}
 \subfloat[\label{fig:comparek2p}]{\includegraphics[width=.3\textwidth]{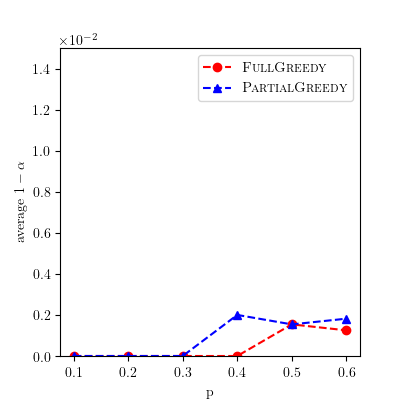}}

\caption{We compare \greedyone\ and \greedytwo\ to the optimal HAG computation graph on a set of $50$ Erd\H{o}s-R\'enyi graphs $G(n,p)$ with $n=15$. The $y$-axis plots average values of $1-\alpha$, where $\alpha$ is the approximation ratio. The $x$-axis plots the parameter $p$.  Shown are (a) $k=2$ and (b) $k=3$. }
\label{fig:accuracygraphs}
\end{figure}

\subsection{Faster Heuristics}\label{sec:hubexp}
While \greedyone\ and \greedytwo\ are much faster in practice than computing the optimal solution, they are still computationally intensive for large values of $k$ and large datasets. In this section we describe two alternative heuristics, \degalgo\ and \hubalgo, which only achieve a fraction of  the value of \greedyone, but compute the HAG computation graph significantly faster.

\degalgo\ starts by ranking all of the vertices of the input graph $G=(V,E)$ by degree:  $\{v_i\}_{i=1}^n$ with  $\Gamma_{out}(v_i)\geq \Gamma_{out}(v_{i+1})$ for $i=1,...,n$. It then takes the top $k$ adjacent pairs of the sequence (i.e., $(v_1,v_2),(v_2,v_3),\ldots,(v_{2k-1},v_{2k})$) as the covers of the $k$ aggregation nodes and constructs a single-layer 2-HAG computation graph. The out-edges of the aggregation nodes are assigned greedily in the same cover order $(v_1,v_2),(v_2,v_3),...$ based on degree. We compare this heuristic to \greedyone\ for value and runtime in Table~\ref{table:heuristicdat}. This method performs decently on the Facebook and Email-EU datasets, and significantly worse on the Amazon purchasing network. We conjecture that this is because the Amazon network has has a significantly lower average degree (about 2.8) than the other two sets (about 22 for Facebook and 25 for Email-EU).  

\hubalgo\ is based on searching  for ``good'' intermediate aggregation nodes around high-degree nodes of $G$. This algorithm is motivated by the frequency with which triangles appear in real-datasets. \hubalgo\ also starts by ranking the vertices from highest to lowest degree as $\{v_i\}_{i=1}^n$. Then for $v_1,...,v_k$ the heuristic does the following: for each $u \in \Gamma_{in}(v_i)$, compute the value of adding aggregation node with cover $\{v_i,u\}$. Then a new node $m$ is added with cover $\{v_i,u\}$ using the $u$ that allows for maximal out-edges from $m$. This process is repeated for $v_1,..,v_k$ in order, so it is greedy in the sense that out neighbors of previous aggregation nodes remain the same during subsequent iterations. We compare \hubalgo\ to \greedyone\ for value and runtime, shown in Table~\ref{table:heuristicdat}.  

\begin{table}[]\centering
\begin{tabular}{l|llll}
         & \multicolumn{2}{l|}{\degalgo \text{ vs. }\greedyone}              & \multicolumn{2}{l}{\hubalgo \text{ vs. }\greedyone} \\
Dataset  & Value Ratio & \multicolumn{1}{l|}{ Runtime  Ratio } &  Value Ratio    & Runtime Ratio     \\ \hline
Amazon   & 0.0699           & \multicolumn{1}{l|}{0.123}           &         0.629               & 0.124
                \\
Email-EU & 0.558             & \multicolumn{1}{l|}{0.0548}           & 0.410                & 0.107
               \\
Facebook & 0.376             & \multicolumn{1}{l|}{0.0408}                                  & 0.313   & 0.0894              
\end{tabular}\caption{For each dataset, \greedyone, \degalgo\, and \hubalgo\ were run 10 times with $k=100$. Value Ratio is computed as the value of the \degalgo\ result divided by the value of the \greedyone\ result for the first column and the value of \hubalgo\ result divided by value of \greedyone\ for the third column. Runtime Ratio is computed in the same way to compare the two heuristics  to \greedyone. 
}\label{table:heuristicdat}
\end{table}

In this paper we have analyzed the optimization problem that arises from \em Hierarchical Aggregation \em (HAG), as introduced by~\cite{HAG} for speeding up learning on GNNs. We showed that \greedyone, the algorithm proposed by~\cite{HAG}, cannot do better than a 1/2 approximation. We also described a second greedy algorithm, \greedytwo, which can actually be implemented efficiently for some parameters, and can obtain results  strictly better than \greedyone. We also showed that \greedyone\ achieves a $\frac{1}{d}(1-1/e)$ approximation ratio for a related objective function where $d$ is the in-degree of the intermediate aggregation nodes. 

Next, we showed empirically that single-layer HAGs achieve nearly the same value as multi-layer HAGs and \greedyone\ and \greedytwo\ both get fairly close to the optimal value on small synthetic graphs. Finally, we defined two additional greedy heuristics, \degalgo\ and \hubalgo, and showed that they can achieve about a third to a half of the value of \greedyone\ in a tenth or less of the runtime. 

Our work suggests many interesting future directions, including pinning down the approximation ratio for both \greedyone\ and \greedytwo, and proving approximation guarantees for the heuristics \degalgo\ and \hubalgo\ in terms of the characteristics of the graph.

\section*{Acknowledgements} 
We thank Zhihao Jia, Rex Ying, and Jure Leskovec for helpful conversations.

\bibliographystyle{plain}
\bibliography{citations}

\end{document}